\title{Matching Patterns with Variables under Hamming Distance}
\titlerunning{Matching Patterns with Variables under Hamming Distance}
\author{Pawe\l{}  Gawrychowski}{University of Wroc\l{}aw, Faculty of Mathematics and Computer Science}{gawry@cs.uni.wroc.pl}{https://orcid.org/0000-0002-6993-5440}{}
\author{Florin Manea }{G\"ottingen University, Computer Science Department and Campus-Institut Data Science, Germany}{florin.manea@cs.informatik.uni-goettingen.de}{https://orcid.org/0000-0001-6094-3324}{}
\author{Stefan Siemer}{G\"ottingen University, Computer Science Department, Germany}{stefan.siemer@cs.uni-goettingen.de}{https://orcid.org/0000-0001-7509-8135}{}
\authorrunning{P. Gawrychowski and F. Manea and S. Siemer}
\keywords{Pattern with variables,
Matching algorithms,
Hamming distance,
Conditional lower bounds,
Patterns with structural restrictions}
\newcommand{\problemtitle}[1]{\gdef\@problemtitle{#1}}
\newcommand{\probleminput}[1]{\gdef\@probleminput{#1}}
\newcommand{\problemquestion}[1]{\gdef\@problemquestion{#1}}
  \par\addvspace{.5\baselineskip}
  \par\addvspace{.5\baselineskip}
\newcommand{\sig}[1]{\Sigma^{#1}}
\newcommand{\plus}{\mathtt{+}}
\newcommand{\interval}[3]{#1[#2:#3]}
\newcommand{\pos}[2]{#1[#2]}
\newcommand{\len}[1]{|#1|}
\newcommand{\set}[1]{\lbrace #1 \rbrace}
\newcommand{\hdist}[2]{d_{\mathtt{HAM}}(#1,#2)}
\newcommand\triphash{\scalebox{0.8}{\raisebox{0.4ex}{\#\#\#}}}
\newcommand\singlehash{\scalebox{0.8}{\raisebox{0.4ex}{\#}}}
\newcommand{\match}{\mathtt{Match}}
\newcommand{\misMatch}{\mathtt{MisMatch}}
\newcommand{\minMisMatch}{\mathtt{MinMisMatch}}
\newcommand{\alp}{\mathtt{alph}}
\newcommand{\var}{\mathtt{var}}
\newcommand{\regPat}{\mathtt{Reg}}
\newcommand{\oneRepPat}{\mathtt{1RepVar}}
\newcommand{\kRepPat}{\mathtt{kRepVar}}
\newcommand{\oneVarPat}{\mathtt{1Var}}
\newcommand{\nonCrossPat}{\mathtt{NonCross}}
\newcommand{\OV}{\mathtt{OV}}
\newcommand{\CP}{\mathtt{CP}}
\newcommand{\scd}{\mathtt{scd}}
\newcommand{\kScdPat}{\mathtt{kSCD}}
\newcommand{\oneScdPat}{\mathtt{1SCD}}
\newcommand{\kLocPat}{\mathtt{kLOC}}
\begin{document}
\colorlet{lipicsYellow}{white}

\maketitle

\begin{abstract}
A pattern $\alpha$ is a string of variables and terminal letters. We say that $\alpha$ matches a word $w$, consisting only of terminal letters, if $w$ can be obtained by replacing the variables of $\alpha$ by terminal words. The matching problem, i.e., deciding whether a given pattern matches a given word, was heavily investigated: it is NP-complete in general, but can be solved efficiently for classes of patterns with restricted structure. In this paper, we approach this problem in a generalized setting, by considering approximate pattern matching under Hamming distance. More precisely, we are interested in what is the minimum Hamming distance between $w$ and any word $u$ obtained by replacing the variables of $\alpha$ by terminal words. Firstly, we address the class of regular patterns (in which no variable occurs twice) and propose efficient algorithms for this problem, as well as matching conditional lower bounds. We show that the problem can still be solved efficiently if we allow repeated variables, but restrict the way the different variables can be interleaved according to a locality parameter. However, as soon as we allow a variable to occur more than once and its occurrences can be interleaved arbitrarily with those of other variables, even if none of them occurs more than once, the problem becomes intractable.\looseness=-1
\end{abstract}

\section{Introduction}
A \emph{pattern} (with variables) is a string which consists of \emph{terminal letters} (e.\,g., $\mathtt{a,b,c}$), treated as constants, and \emph{variables} (e.\,g., $x_1, x_2$). A pattern is mapped to a word by substituting the variables by strings of terminals. For example, $x_1 x_1 \mathtt{bab} x_2 x_2$ can be mapped to $\mathtt{aa aa bab bb}$ by the substitution $(x_1 \to \mathtt{aa}, x_2 \to \mathtt{b})$. If a pattern $\alpha$ can  be mapped to a string of terminals $w$, we say that $\alpha$ matches $w$. The problem of deciding whether there exists a substitution which maps a given pattern $\alpha$ to a given word $w$ is called the {\em matching problem.} 

Patterns with variables and their matching problem appear in various areas of theoretical computer science. In particular, the matching problem is a particular case of the satisfiability problem for word equations. These are equations whose both sides are patterns with variables and whose solutions are substitutions that map both sides to the same word \cite{Loth97}; in the pattern matching problem, one side of the input equation is a string of terminals. Patterns with variables occur also in combinatorics on words (e.g., unavoidable patterns~\cite{Loth02}), stringology (e.g., generalized function matching~\cite{ami:gen, ord:apa}), language theory (e.g., pattern languages~\cite{DBLP:journals/jcss/Angluin80}), or database theory (e.g., 
document spanners~\cite{FreydenbergerHolldack2018, Freydenberger2019, FaginEtAl2015, SchmidSchweikardt2021}). In a more practical setting, patterns with variables are used in connection to extended regular expressions with backreferences~\cite{cam:afo, fri:mas, Fre2013, FreydenbergerSchmid2019}, used in various programming languages. \looseness=-1

The \emph{matching problem} is NP-complete \cite{DBLP:journals/jcss/Angluin80} in general. This is especially unfortunate for some computational tasks on patterns which implicitly solve the matching problem and are thus intractable as well. For instance, in algorithmic learning theory, this is the case for the task of computing {\em descriptive patterns} for finite sets of words~\cite{DBLP:journals/jcss/Angluin80,DBLP:journals/tcs/FernauMMS18}. Such descriptive patterns are  useful for the inductive inference of pattern languages, a prominent example of a language class which can be inferred from positive data (see, the survey \cite{shi:pat} and the references therein). This and many other applications of pattern matching provide a good motivation to identify cases in which the matching problem becomes tractable. A natural approach to this task is to consider restricted classes of patterns. A thorough analysis~\cite{ReidenbachS14, shi:pol2,FerSch2015, DBLP:journals/mst/FernauSV16,DBLP:journals/toct/FernauMMS20,schmid13} of the complexity of the matching problem has provided several subclasses of patterns for which the matching problem is in P, when some structural parameters of patterns are bounded by constants. Prominent examples in this direction are patterns with a bounded number of repeated variables occurring in a pattern, patterns with bounded scope coincidence degree~\cite{ReidenbachS14}, or patterns with bounded locality~\cite{DayFMN17}. The formal definitions of these parameters are given in Section \ref{sec:repeated}, and corresponding efficient matching algorithms be found in \cite{DBLP:journals/toct/FernauMMS20,DayFMN17}, but, to give an intuition, we mention that they are all numerical parameters which describe the structure of patterns and parameterize the complexity of the matching algorithms. That is, in all cases, if the respective parameter equals $k$, the matching algorithm runs in $O(n^{ck})$ for some constant $c$, and, moreover, the matching problem can be shown to be $W[1]$-hard w.r.t. the respective parameter. A more general approach \cite{ReidenbachS14} introduces the notion of treewidth of patterns, and shows that the matching problem can be solved in $O(n^{2k+4})$ time for patterns with bounded treewidth $k$. The algorithms resulting from this general theory are less efficient than the specialized ones, while the matching problem remains $W[1]$-hard w.r.t. treewidth of patterns. See also the survey~\cite{DBLP:conf/cwords/ManeaS19}. \looseness=-1

In this paper, we extend the study of patterns which can be matched efficiently to the case of approximate matching: we allow mismatches between the word $w$ and the image of $\alpha$ under a substitution. More precisely, we consider two problems. In the decision problem $\misMatch_P$ we are interested in deciding, for a given pattern $\alpha$ from a class $P$, a given word $w$, and a non-negative integer $\Delta$ whether there exists a variable-substitution $h$ such that the word $h(\alpha)$ has at most $\Delta$ mismatches to the word $w$; in other words, the Hamming distance $\hdist{h(\alpha)}{w}$ between $h(\alpha)$ and $w$ is at most $\Delta$. Alternatively, we consider the corresponding minimisation problem $\minMisMatch_P$ of computing $\hdist{\alpha}{w}=\min\{\hdist{h(\alpha)}{w}\mid h\mbox{ is a substitution of the variables in }\alpha\}$.

As most real-world textual data (e.g., involving genetic data or text written by humans) contains errors, considering string-processing algorithms in an approximate setting is natural and has been heavily investigated. See, e.g., the recent papers \cite{DBLP:conf/focs/Charalampopoulos20,DBLP:conf/icalp/GawrychowskiU18,GawrychowskiU17,Uznanski20}, and the references therein, as well as classical results such as  \cite{DBLP:journals/jal/AmirLP04,Myers1989,DBLP:conf/focs/LandauV85}. Closer to the topic of this paper, the problem of approximate pattern matching was also considered in the context of regular expression matching -- see \cite{DBLP:journals/tcs/BilleF08,Myers1989} and the references therein. Continuing this line of research, we initiate a study of approximate matching problems for patterns with variables. Intuitively, in our problems, we ask if the input word $w$ is a few mismatches away from matching the pattern $\alpha$, i.e., if $w$ can be seen as a slightly erroneous version of a word which exactly matches $\alpha$. \looseness=-1

\textbf{Our Contribution.} Our results are summarized in Table \ref{results}. 
In this table we describe the results we obtained for the problems $\misMatch_P$ and $\minMisMatch_P$ (introduced informally above and formally in Section \ref{sec:Prel}) for a series of classes $P$ of patterns for which the matching problem $\match$ can be solved in polynomial time. The classes $P$ we consider are the following. The class $\regPat$ of regular patterns, which do not contain more than one occurrence of any variable. The class $\oneVarPat$ of unary patterns, which contain several occurrences of a single variable and terminals. The class $\nonCrossPat$ of non-cross-patterns, which can be factorized in multiple $\oneVarPat$-patterns whose variables are pairwise different. The class $\oneRepPat$ of one-repeated-variables, where only one variable (say $x$) is allowed to occur more than once. The classes $\kLocPat$ of $k$-local patterns and $\kScdPat$ of patterns with scope coincidence degree at most $k$, defined formally in Section \ref{sec:repeated}. The class $\kRepPat$ of $k$-repeated-variables, where only $k$ variables are allowed to occur more than once. We also (indirectly) obtain a lower bound for the complexity of $\misMatch$ and $\minMisMatch$ in the case of patterns with treewidth at most $k$. 

Interestingly, for $\regPat$ we obtain matching upper and conditional lower bounds. As regular patterns are, in fact, a particular case of regular expressions, it is worth mentioning that, due to the conditional lower bounds from \cite{DBLP:conf/focs/BackursI16} on exact regular expression matching, it is not to be expected that the general case of matching regular-expressions under Hamming distance can be solved as efficiently as the case of regular patterns. 
Regarding patterns with repeated variables, we note that while in the case when the number of repeated variables, the scope coincidence degree, or the treewidth was bounded by a constant, polynomial-time algorithms for the exact matching problem were obtained. This does not hold in our approximate setting, unless P=NP. Only the locality measure has the same behaviour as in the case of exact matching: $\misMatch_{\kLocPat}$ and $\minMisMatch_{\kLocPat}$ can still be solved in polynomial time for constant $k$. In the simpler case of $\oneRepPat$-patterns, the locality corresponds to the number of $x$-blocks, so, if this is bounded by a constant, the two problems we consider can be solved in polynomial time. 
\noindent 
\begin{table}
{\footnotesize
\caption{Our results are listed in columns 3 and 4. We assume $|w|=n$, $|\alpha|=m$, $|\var(\alpha)|=p.$}
\vspace{-8pt}
\begin{tabular}{|l|l|l|l|}
\hline
\!\!Class & \!\!$\match(w,\alpha)$ & \!\!$\misMatch(w,\alpha,\Delta)$ & \!\!$\minMisMatch (w,\alpha)$ \\ 
\hline
\!\!$\regPat$ &\!\!$O(n)$ [folklore] & \!\!$O(n\Delta)$ & \!\!$O(n\hdist{\alpha}{w})$ \\
 & & \!\!matching cond. lower bound & \!\!matching cond. lower bound \\
\hline
\!\!$\oneVarPat$ & \!\!$O(n)$ [folklore] & \!\!$O(n)$ & \!\!$O(n)$  \\
\hline
\!\!$\nonCrossPat$ & \!\!$O(nm\log n)$ \cite{DBLP:journals/toct/FernauMMS20}& \!\!$O(n^3p)$ & \!\!$O(n^3p)$  \\
\hline
\!\!$\oneRepPat$ & \!\!$O(n^2)$ \cite{DBLP:journals/toct/FernauMMS20} & \!\!$O(n^{k+2}m)$ & \!\!$O(n^{k+2}m)$, PTAS \\
\!\!$k$=\# $x$-blocks\!\! & & \!\!W[1]-hard w.r.t. $k$ &\!\!W[1]-hard w.r.t. $k$ \\
 & & &\!\!no EPTAS (if $FPT\neq W[1]$) \\
\hline
\!\!$\kLocPat$ & \!\!$O(mkn^{2k+1})$ \cite{DayFMN17} & \!\!$O(n^{2k+2}m)$ & \!\!$O(n^{2k+2}m)$ \\
 & \!\!W[1]-hard w.r.t. $k$ & \!\!W[1]-hard w.r.t. $k$ &\!\!W[1]-hard w.r.t. $k$ \\
 & & & \!\!no EPTAS (if $FPT\neq W[1]$)\\
\hline
\!\!$\kScdPat$ & \!\!$O(m^2n^{2k})$ \cite{DBLP:journals/toct/FernauMMS20} & \!\!NP-hard for $k\geq 2$ & \!\!NP-hard for $k\geq 2$ \\
 & \!\!W[1]-hard w.r.t. $k$ & & \\
\hline
\!\!$\kRepPat$ & \!\!$O(n^{2k})$ \cite{DBLP:journals/toct/FernauMMS20}  & \!\!NP-hard for $k\geq 1$ & \!\!NP-hard for $k\geq1$\\
 & \!\!W[1]-hard w.r.t. $k$ & & \\
\hline
\!\!$k$-bounded\!\! & \!\!$O(n^{2k+4})$ \cite{ReidenbachS14} & \!\!NP-hard for $k\geq 3$ & \!\!NP-hard for $k \geq 3$ \\
\!\!treewidth\!\!  & \!\!W[1]-hard w.r.t. $k$ & & \\
\hline
\end{tabular}\label{results}
\vspace{-20pt}
}\end{table}

The paper is organized as follows: after some preliminaries, we present in detail the results on $\regPat$-patterns. Then we overview the results on patterns with repeated variables. 

\textbf{Future Work.} While our results paint a detailed image of the complexity of $\misMatch$ and $\minMisMatch$ for some prominent classes of patterns for which the matching problem can be solved efficiently, some continuations of this work can be easily identified. Following~\cite{DBLP:journals/toct/FernauMMS20}, it would be interesting to try to optimise the algorithms for all classes from the table (except $\regPat$, where the upper and conditional lower bounds match). In the case of $\regPat$, it would be interesting to consider the problem for regular patterns with a constant number of variables; already in the case of two variables (also known as approximate string matching under Hamming distance) the known complexity upper and lower bounds do not match anymore \cite{GawrychowskiU17,Uznanski20}.  Another direction is to consider the two problems for other distance functions (e.g., edit distance) instead of the Hamming distance. Finally, it would be interesting if the applications of pattern matching in the area of algorithmic learning theory can be formulated (and still remain interesting) in this approximate setting. \looseness=-1

\section{Preliminaries} \label{sec:Prel}

Let $\Sigma$ be a finite alphabet of {\em terminal letters}. Let $\sig{\star}$ be the set of all words and $\varepsilon$ the empty word. 
The concatenation of $k$ words $w_1, w_2, \ldots , w_k$ is written $\Pi_{i=0}^k w_i$.
The set $\sig{\plus}$ is defined as $\sig{\star} \setminus \set{\varepsilon}$. For $w\in\sig{\star}$ the length of $w$ is defined the number of symbols of $w$, and denoted as $\len{w}$. Further, let $\sig{n} = \set{w\in\sig{\star}~|~\len{w}=n}$ and $\sig{\leq n} = \bigcup_{i=0}^n \sig{i}$. The letter on position $i$ of $w$, for $1\leq i\leq \len{w}$, is denoted by $\pos{w}{i}$. 
For $w \in\sig{\plus}$ and $x,y,z \in \sig{\star}$, the word $y$ is a factor of $w$, if $w = xyz$; moreover, if $x=\varepsilon$ (respectively, $z=\varepsilon$, then $y$ is called a prefix (respectively, suffix) of $w$. Let $\interval{w}{i}{j} = \pos{w}{i} \cdots \pos{w}{j}$ be the factor of $w$ starting on position $i$ and ending on position $j$; if $i>j$ then $w[i:j]=\varepsilon$. By $[i:j]$ we denote the set $\{i,i+1,\ldots,j\}$ and $D[i:j]$ denotes an array $D$ whose positions are indexed by the numbers in $[i:j]$. 

Let $\mathcal{X} = \set{x_1, x_2, x_3. \ldots}$ be a set of {\em variables}. For the set of terminals $\Sigma$ and the set of variables $\mathcal{X}$ with $\Sigma\cap \mathcal{X} = \emptyset$, a pattern $\alpha$ is a word containing both terminals and variables, i.e., an element of $PAT_{\Sigma} = (\mathcal{X}\cup\Sigma)^{\plus}$. The set of all patterns, over all terminal-alphabets, is denoted $PAT = \bigcup_\Sigma PAT_\Sigma$. Given a word or a pattern $\gamma$, for the smallest sets (w.r.t. inclusion) $B\subseteq \Sigma$ and $Y\subseteq \mathcal{X}$ with $\gamma \in (B\cup Y)^\star$, define the set of terminal symbols in $v$, denoted by $\alp(\gamma) = B$, and the set of variables of $\gamma$, denoted by $\var(v)=Y$. For any symbol $t \in \Sigma \cup \mathcal{X}$ and $\alpha \in PAT_\Sigma$, $\len{\alpha}_t$ denotes the number of occurrences of $t$ in $\alpha$. 

A substitution (on the variables of $\alpha$) is a mapping $h: \var(\alpha) \rightarrow \sig{\star}$. For every $x\in \var(\alpha)$, we say that $x$ is substituted by $h(x)$ and $h(\alpha)$ denotes the word obtained by substituting every occurrence of a variable $x$ in $\alpha$ by $h(x)$ and leaving all the terminals unchanged. We say that the pattern $\alpha$ matches a word $w\in\sig{\plus}$, if there exists a substitution $h: \var(\alpha) \rightarrow \sig{\star}$ such that $h(\alpha) = w$. The Matching Problem is defined for any family of patterns $P \subseteq PAT$:

\begin{problem}
  \problemtitle{Exact Matching Problem for $P$: $\match_P$}
  \probleminput{A pattern $\alpha \in P$, with $|\alpha|=m$, a word $w$, with $|w|=n$.}
  \problemquestion{Is there a substitution $h$ with $h(\alpha) = w$?}
\end{problem}

In this paper, we will consider an extension of the Matching Problem, in which we allow mismatches between the image of the pattern under a substitution and the matched word. 

For words $w_1, w_2 \in \sig{\star}$ with $\len{w_1} = \len{w_2}$, the Hamming distance between $w_1$ and $w_2$ is defined as $\hdist{w_1}{w_2}= \len{\set{\pos{w_1}{i} \neq \pos{w_2}{i}~|~ 1\leq i \leq \len{w_1}}}$. The Hamming Distance describes, therefore, the number of mismatches between two words. For a pattern $\alpha$ and a word $w$, we can define the Hamming Distance between $\alpha$ and $w$ as $\hdist{\alpha}{w}= \min \{\hdist{h(\alpha)}{w}\mid h\mbox{ is a substitution of the variables of }\alpha  \}$.
With these definitions we can introduce two new pattern matching problems for families of pattern $P\subseteq PAT$. In the first problem, we allow for a certain distance $\Delta$ between the image $h(\alpha)$ of $\alpha$ under a substitution $h$ and the target word $w$ instead of searching for an exact matching. In the second problem, we are interested in finding the substitution $h$ such that the number of mismatches between $h(\alpha)$ and the target word $w$ is minimal, over all possible choices of $h$.

\begin{problem}
  \problemtitle{Approximate Matching Decision Problem for $P$: $\misMatch_P$}
  \probleminput{A pattern $\alpha \in P$, with $|\alpha|=m$, a word $w$, with $|w|=n$, an integer $\Delta\leq m$.}
  \problemquestion{Is $\hdist{\alpha}{w} \leq \Delta$?}
\end{problem}

\begin{problem}
  \problemtitle{Approximate Matching Minimisation Problem for $P$: $\minMisMatch_P$}
  \probleminput{A pattern $\alpha \in P$, with $|\alpha|=m$, a word $w$, with $|w|=n$.}
  \problemquestion{Compute $\hdist{\alpha}{w}$.}
\end{problem}

When analysing the number of mismatches between $h(\alpha)$ and $w$ we need to argue about the number of mismatches between corresponding factors of $h(\alpha)$ and $w$, i.e., the factors occurring between the same positions $i$ and $j$ in both words. To simplify the presentations, for a substitution $h$ that maps a pattern $\alpha$ to a word of the same length as $w$, we will call the factors $h(\alpha)[i:j]$ and $w[i:j]$ aligned under $h$. We omit $h$ when it is clear from the context. Moreover, saying that we align a factor $\alpha[i:j]$ to a factor $w[i':j']$ with a minimum number of mismatches, we mean that we are looking for a substitution $h$ such that $|h(\alpha)|=|w|$, $h(\alpha[i:j])$ is aligned to $w[i':j']$ under $h$, and the resulting number of mismatches between $h(\alpha[i:j])$ and $w[i':j']$ is minimal w.r.t. all other choices for the substitution $h$.

We make some preliminary remarks. Firstly, in all the problems we consider here, we can assume that the pattern $\alpha$ starts and ends with variables, i.e., $\alpha =x\alpha' y$, with $\alpha'$ pattern and $x$ and $y$ variables. Indeed, if this would not be the case, we could simply reduce the problems by considering them for inputs $\alpha'$ and the word $w'$ obtained by removing from $w$ the prefix and suffix aligned, respectively, to the maximal prefix of $\alpha$ which contains only terminals and the maximal suffix of $\alpha$ which contains only terminals. Clearly, in the case of the exact-matching problem the respective prefixes (suffixes) of $w$ and $\alpha$ must match exactly, while in the case of the approximate-matching problems one needs to account for the mismatches created by these prefixes and suffixes. So, from now on, we will work under the assumption that the patterns we try to align to words start and end with variables. 

Secondly, solving $\match_P$ is equivalent to solving $\misMatch_P$ for $\Delta=0$. Also, in a general framework, $\minMisMatch_P$ can be solved by combining the solution of the decision problem $\misMatch_P$ with a binary search on the value of $\Delta$. Given that the distance between $\alpha$ and $w$ is at most $n=|w|$, one needs to use the solution for $\misMatch_P$ a maximum of $log~n$ times in order to find the exact distance between $\alpha$ and $w$. Sometimes this can be done even more efficiently, as shown in Theorem \ref{thm:minMisMatch}. On the other hand, solving $\minMisMatch_P$ leads directly to a solution for $\misMatch_P$. 

The computational model we use to describe our results is the standard unit-cost RAM with logarithmic word size: for an input of size $n$, each memory word can hold $\log n$ bits. Arithmetic and bitwise operations with numbers in $[1:n]$ are, thus, assumed to take $O(1)$ time. Numbers larger than $n$, with $\ell$ bits, are represented in $O(\ell/\log n)$ memory words, and working with them takes time proportional to the number of memory words on which they are represented. In all the problems, we assume that we are given a word $w$ and a pattern $\alpha$, with $|w|=n$ and $|\alpha|=m\leq n$, over a terminal-alphabet $\Sigma=\{1,2,\ldots,\sigma\}$, with $|\Sigma|=\sigma\leq n$. The variables are chosen from the set $\{x_1,\ldots, x_n\}$ and can be encoded as integers between $n+1$ and $2n$. That is, we assume that the processed words are sequences of integers (called letters or symbols), each fitting in $O(1)$ memory words. This is a common assumption in string algorithms: the input alphabet is said to be {\em an integer alphabet}. For instance, the same assumption was also used for developing efficient algorithms for $\match$ in \cite{DBLP:journals/tcs/FernauMMS18}. For a more detailed general discussion on this model see, e.g.,~\cite{crochemore}.

\section{Matching Regular Patterns with Mismatches}\label{sec:Reg}
A pattern $\alpha$ is {\em regular} if $\alpha=w_0\prod_{i=1}^M (x_iw_i)$, with $w_i \in \sig{\star}$. The class of regular patterns is denoted by $\regPat$. For example, the pattern $\alpha_0=\mathtt{ab} x \mathtt{ab} yz \mathtt{baab} $, with $\var{\alpha}=\{x,y,z\}$ is in $\regPat.$

In this section we consider $\misMatch_{\regPat}$ and $\minMisMatch_{\regPat}$.

As mentioned already, a solution for $\misMatch_{\regPat}$ with distance $\Delta=0$ is a solution to $\match_{\regPat}$. The latter problem can be solved in $\mathcal{O}(n)$ by a greedy approach. As noted in Section \ref{sec:Prel}, we can assume that $w_0=w_M=\varepsilon$, so $\alpha=\prod_{i=1}^{M-1} (x_iw_i) x_M$. Thus, we identify the last occurrence $w[\ell+1:\ell+|w_{M-1}|]$ of $w_{M-1}$ in $w$, assign the string $w[\ell+|w_{M-1}|+1:n]$ to $x_{M}$, and then recursively match the pattern $\alpha=\prod_{i=1}^{M-2} (x_iw_i) x_{M-1}$ to $w[1:\ell]$. 

In the following, we propose a solution for $\minMisMatch_{\regPat}$ which generalizes this approach. Further, we will show a matching lower bound for any algorithm solving $\minMisMatch_{\regPat}$.

\subsection{Efficient solutions for $\misMatch_\regPat$ and $\minMisMatch_\regPat$.}
An equivalent formulation of $\minMisMatch_\regPat$ is to find factors $w[\ell_i+1:\ell_i+|w_i|]$, with $1\leq  i\leq M-1$, such that $\sum_{i=1}^{M-1}\hdist{w_i}{w[\ell_i+1:\ell_i+|w_i|]}$ is minimum and $\ell_{i}+|w_i|+1\leq \ell_{i+1}$, for all $i\in \{1,\ldots, M-2\}$. In other words, we want to find the $M-1$ factors $w[\ell_i+1:\ell_i+|w_i|]$, with $i$ from $1$ to $M-1$, such that these factors occur one after the other without overlapping in $w$, they correspond (in order, from left to right) to the words $w_i$, for $i$ from $1$ to $M-1$, and the total sum of mismatches between $w[\ell_i+1:\ell_i+|w_i|]$ and $w_i$, added up for $i$ from $1$ to $M-1$, is minimal. 

To approach this problem we need the following data-structures-preliminaries.

    Given a word $w$, of length $n$, we can construct in $O(n)$-time {\em longest common suffix}-data structures which allow us to return in $O(1)$-time the value $LCS_w(i,j)=max\{|v|\mid v\mbox{ is a suffix of both } w[1:i]\mbox{ and }w[1:j]\}$. See \cite{DBLP:conf/icalp/KarkkainenS03,DBLP:journals/jacm/KarkkainenSB06} and the references therein. 
    Given a word $w$, of length $n$, and a word $u$, of length $m$, we can construct in $O(n+m)$-time data structures which allow us to return in $O(1)$-time the value $LCS_{w,u}(i,j)=max\{|v|\mid v\mbox{ is a suffix of both } w[1:i]\mbox{ and }u[1:j]\}$. This is achieved by constructing $LCS_w$-data structures for $wu$, as above, and noting that $LCS_{w,u}(i,j)=\min(LCS_w(i,n+j),j)$. 

The following two lemmas are based on the data structures defined above and the technique called kangaroo-jump \cite{DBLP:conf/focs/LandauV85}.

\begin{restatable}{lemma}{kangSimple}\label{lem:kangSimple}
Let $w$ and $u$, with $|w|=|u|=n$, be two words and $\delta$ a non-negative integer. Assume that, in a preprocessing phase, we have constructed $LCS_{w,u}$-data structures. We can compute $\min(\delta+1,\hdist{u}{w})$ using $\delta+1$ $LCS_{w,u}$ queries, so in $O(\delta)$ time. 
\end{restatable}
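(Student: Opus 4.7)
The plan is to adapt the standard Landau–Vishkin kangaroo-jumping technique~\cite{DBLP:conf/focs/LandauV85} to the $LCS_{w,u}$ data structures defined just above, scanning $w$ and $u$ from right to left. I maintain a pair of pointers $(i,j)$, initially $(n,n)$, together with a mismatch counter $c$, initially $0$. As long as $i\geq 1$ and $c\leq \delta$, I issue a single query $k:=LCS_{w,u}(i,j)$, which tells me the length of the longest common suffix of the prefixes $w[1:i]$ and $u[1:j]$. Since both prefixes have the same length $i=j$ throughout the procedure, the characters $w[i-k+1:i]$ and $u[i-k+1:i]$ coincide and contribute no mismatches. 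If $k=i$ we have reached the start of both words with no further mismatches, and the procedure halts; otherwise the characters $w[i-k]$ and $u[i-k]$ necessarily differ, so I increment $c$ by $1$ and set $i,j:=i-k-1$, before issuing the next query.

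The loop invariant is that after $t$ queries the counter $c$ equals the number of mismatches of $u$ and $w$ in positions $i+1,\ldots,n$, while positions $1,\ldots,i$ remain to be processed. Hence when the loop terminates because $i<1$, the value of $c$ is exactly $\hdist{u}{w}$; and when it terminates because $c>\delta$, the correct answer to return is $\delta+1$. Since each mismatch is discovered by exactly one query, and an additional final query may be needed to certify that the remaining prefix matches entirely, the total number of queries is at most $\delta+1$. Each query is answered in $O(1)$ time thanks to the preprocessing, so the overall running time is $O(\delta)$ as claimed.

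There is no real obstacle here; the only subtlety is the boundary case $k=i$, where one must observe that no mismatch should be counted because the prefixes have been fully consumed rather than a new differing position being reached. This is handled by the explicit test $k=i$ before incrementing $c$.
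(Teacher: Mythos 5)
Your proposal is correct and follows essentially the same approach as the paper: a right-to-left kangaroo-jump scan using $LCS_{w,u}$ queries, counting one mismatch per query and stopping once $\delta+1$ mismatches have been found, with the same invariant that the counter equals the Hamming distance of the already-processed suffixes. The only differences are cosmetic (naming of pointers and the explicit handling of the $k=i$ boundary case, which the paper treats via its $h=b$ branch).
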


\begin{proof}
Let $a=b=m$ and $d=0$. While $a>0$ and $d\leq \delta$ execute the following steps. Compute $h=LCS_{w,u}(a,b)$. If $h<b$, then increment $d$ by $1$, set $a\gets a-h-1$ and $b\gets b-h-1$, and start another iteration of the while-loop. If $h=b$, then set $b\gets 0$ and exit the while-loop.

It is not hard to note that before each iteration of the while loop it holds that $d=\hdist{w[a+1:m]}{u[b+1:m]}$. When the while loop is finished, $d=\min(\hdist{w[i-m+1:i]}{u[1:m]},\delta +1)$. In each iteration we first identify the length $h$ of the longest common suffix of $w[1:a]$ and $u[1:b]$. Then, we jump over this suffix, as it causes no mismatches, and have either traversed completely the words $w$ and $u$ (and we do not need to do anything more), or we have reached a mismatch between $w$ and $u$, on position $a-h=b-h$. In the latter case, we count this mismatch, jump over it, and repeat the process (but only if the number of mismatches is still at most $\delta$). So, in other words, we go through the mismatches of $w$ and $u$, from right to left, and jump from one to the next one using $LCS_{w,u}$ queries. If we have more than $\delta$ mismatches, we do not count all of them, but stop as soon as we have met the $(\delta+1)^{th}$ mismatch. Accordingly, the algorithm is correct. Clearly, we only need $\delta+1$ $LCS_{w,u}$-queries and the time complexity of this algorithm is $O(\delta)$, once the $LCS_{w,u}$-data structures are constructed. 
\end{proof}

\begin{restatable}{lemma}{kang}\label{lem:kang}
Given a word $w$, with $|w|=n$, a word $u$, with $|u|=m<n$, and a non-negative integer $\delta$, we can compute in $O(n\delta)$ time the array $D[m:n]$ with $n-m+1$ elements, where $D[i]=\min(\delta+1, \hdist{w[i-m+1:i]}{u})$.
\end{restatable}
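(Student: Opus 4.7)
The plan is to reduce Lemma \ref{lem:kang} to $n-m+1$ independent applications of the kangaroo-jump idea already packaged in Lemma \ref{lem:kangSimple}. Concretely, in the preprocessing phase I would build the $LCS_{w,u}$ data structures in $O(n+m)$ time, using the construction recalled just before Lemma \ref{lem:kangSimple} (i.e., building $LCS$ structures for the concatenation $wu$, and retrieving $LCS_{w,u}(i,j)=\min(LCS_{wu}(i,n+j),j)$). This is done only once for the whole algorithm.

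Then, for each position $i\in[m:n]$, I would compute $D[i]=\min(\delta+1,\hdist{w[i-m+1:i]}{u})$ by a single kangaroo-jump exactly as in the proof of Lemma \ref{lem:kangSimple}, but adapted so that the left-hand window starts at $i-m+1$ in $w$ rather than at position $1$. That is, I initialise $a\gets i$, $b\gets m$, $d\gets 0$, and while $b>0$ and $d\le \delta$ I query $h\gets LCS_{w,u}(a,b)$; if $h=b$ the windows have been completely traversed and I exit, otherwise I increment $d$, set $a\gets a-h-1$, $b\gets b-h-1$, and iterate. By the same invariant as in Lemma \ref{lem:kangSimple}, the value $d$ returned equals $\min(\delta+1,\hdist{w[i-m+1:i]}{u})$, which is exactly $D[i]$. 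Each such computation performs at most $\delta+1$ $LCS_{w,u}$ queries and therefore runs in $O(\delta+1)$ time, so iterating over all $i\in[m:n]$ yields total time $O(n+m+(n-m+1)(\delta+1))=O(n\delta)$ (treating the $\delta=0$ case with the convention that we still make one query per window, which is absorbed in the $O(n)$ term).

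Honestly, there is no serious technical obstacle: the statement is essentially just "run Lemma \ref{lem:kangSimple} for every length-$m$ window of $w$ after a single shared preprocessing of $LCS$ data structures." The only small thing worth double-checking is that the $LCS_{w,u}$ data structure built once for $w$ and $u$ answers queries of the form $LCS_{w,u}(a,b)$ for arbitrary $a\in[1:n]$ and $b\in[1:m]$, so that no rebuild is needed per window; this is precisely what the preliminary construction provides. Hence the claimed $O(n\delta)$-time bound follows and the array $D[m:n]$ can be filled in place.
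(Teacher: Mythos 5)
Your proposal is correct and matches the paper's own proof: build the $LCS_{w,u}$ data structures once and then invoke the kangaroo-jump routine of Lemma \ref{lem:kangSimple} for each length-$m$ window of $w$, for a total of $O(n\delta)$ time. The only difference is that you spell out the shifted starting index $a=i$ explicitly, whereas the paper just notes that the shared structure serves all windows.
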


\begin{proof}
We first construct, in linear time, the $LCS_{w,u}$-data structures for the input words. Note that the $LCS_{w,u}$-data structure can be directly used as $LCS_{w[i:i+m-1],u}$ data structure, for all $i\leq n-m+1$. 

Then, for each position $i$ of $w$, with $i\leq m$, we use Lemma \ref{lem:kangSimple} to compute, in $O(\delta)$ time the value $d=\min(\hdist{u}{w[i-m+1:i]},\delta +1)$. We then set $D[i]\gets d$. By the correctness of Lemma \ref{lem:kangSimple}, we get the correctness of this algorithm. Clearly, its time complexity is $O(n\delta)$. 
\end{proof}

The following result is the main technical tool of this section.
\begin{restatable}{theorem}{misMatchThm}\label{thm:misMatch}
$\misMatch_{\regPat}$ can be solved in $O(n\Delta)$ time. For an accepted instance $w,\alpha, \Delta$ of $\misMatch_{\regPat}$ we also compute $\hdist{\alpha}{w}$ (which is upper bounded by $\Delta$).
\end{restatable}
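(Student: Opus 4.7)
My plan is to cast $\misMatch_{\regPat}$ as a layered dynamic program over the constant blocks of $\alpha$, accelerated via the kangaroo-jump machinery of Lemmas~\ref{lem:kangSimple} and~\ref{lem:kang}. Write $\alpha = x_1 w_1 x_2 w_2 \cdots x_{M-1} w_{M-1} x_M$ (the reduction in Section~\ref{sec:Prel} lets us assume variables at both ends) and let $d_i(j) = \hdist{w_i}{w[j-|w_i|+1:j]}$ denote the mismatch cost of aligning $w_i$ so that it ends at position $j$ of $w$. Define $T[i][j]$ as the minimum total Hamming distance over in-order, non-overlapping placements of $w_1,\ldots,w_i$ inside $w[1:j]$ whose $i$-th block ends exactly at $j$. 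The recurrence $T[i][j] = d_i(j) + \min_{j' \leq j-|w_i|} T[i-1][j']$ with boundary $T[0][j]=0$ for all $j \geq 0$ collapses, via the row prefix minima $P[i][j] = \min_{j' \leq j} T[i][j']$, to the constant-time transition $T[i][j] = d_i(j) + P[i-1][j-|w_i|]$. The decision is then ``yes'' iff $P[M-1][n] \leq \Delta$, and the value $P[M-1][n]$ itself equals $\hdist{\alpha}{w}$ on accepted instances, giving the second half of the theorem for free.

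The second ingredient is to evaluate each $d_i(j)$ on demand and only up to the residual budget it can actually use. I would cap every DP entry at $\Delta+1$ and, for each pair $(i,j)$, compute $d_i(j)$ only up to $\delta_{i,j} := \Delta+1 - P[i-1][j-|w_i|]$: if $\delta_{i,j} \leq 0$ then $T[i][j]=\Delta+1$ at no additional cost, and otherwise Lemma~\ref{lem:kangSimple} delivers $\min(d_i(j),\delta_{i,j})$ using $O(\delta_{i,j})$ kangaroo jumps. The longest-common-suffix queries that power these jumps are supplied by a single oracle built in $O(n+m)$ time, supporting $O(1)$ $LCS$ queries between any suffix of $w$ and any suffix of any $w_i$ (for instance, via a generalized suffix tree of the reverses of $w,w_1,\ldots,w_{M-1}$ equipped with constant-time LCA), so each individual jump costs $O(1)$.

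The main obstacle is establishing the $O(n\Delta)$ total bound instead of the easy $O(Mn\Delta)$ one obtains by running Lemma~\ref{lem:kang} once per block. My plan is to sweep $j$ from left to right and at each position probe only those blocks $i$ whose predecessor prefix-minimum $P[i-1][j-|w_i|]$ is still at most $\Delta$, charging each kangaroo jump inside the evaluation of $d_i(j)$ against one unit of the shrinking residual budget $\delta_{i,j}$. The hard technical step will be a charging scheme that bounds the total number of jumps across all feasible $(i,j)$ pairs by $O(n\Delta)$, exploiting the monotonicity of the prefix minima $P[i-1][\cdot]$ and the fact that a global budget of $\Delta$ gets distributed across the at most $n$ positions of $w$. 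Combined with the $O(n+m)$ preprocessing and $O(1)$ bookkeeping per DP entry, this yields the claimed running time and delivers $\hdist{\alpha}{w}$ on accepted instances as a byproduct.
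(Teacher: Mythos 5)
There is a genuine gap, and it sits exactly where you placed your ``hard technical step'': the $O(n\Delta)$ bound is not established, and in fact it cannot be established within your framework. Your table $T[i][j]$ is indexed by (block, position), so it has $\Theta(Mn)$ entries, and the number of \emph{feasible} entries (those with $P[i-1][j-|w_i|]\leq \Delta$) can already be $\Theta(Mn)$ when $\Delta$ is tiny. Take $w=\mathtt{a}^n$ and $\alpha=(x\mathtt{a})^{M-1}x_M$ with every block $w_i=\mathtt{a}$ and $M=\Theta(n)$, and $\Delta=1$: all prefix minima are $0$, so every pair $(i,j)$ with $j\geq i$ is feasible and your sweep must touch each of them, performing at least one $LCS$ query (and your residual cap $\delta_{i,j}=\Delta+1-P[i-1][j-|w_i|]$ prunes nothing, since these entries are genuinely within budget). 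That is $\Theta(n^2)$ work against a target of $O(n\Delta)=O(n)$, so no charging scheme of the kind you postulate can exist; capping by the residual budget controls the cost \emph{per} probe but not the \emph{number} of probes, and monotonicity of $P[i-1][\cdot]$ does not reduce the count of feasible pairs.

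The repair is essentially to transpose the dynamic program, which is what the paper does. Instead of storing a cost for every (block, position) pair, store a position for every (block, budget) pair: $Suf[\ell][d]$ is the starting position of the shortest suffix of $w$ that the suffix pattern $\alpha_\ell=\prod_{i=\ell}^{M-1}(x_iw_i)x_M$ matches with at most $d$ mismatches, for $d\leq\Delta$. This table has only $O(M\Delta)$ entries; row $M-1$ comes from Lemma~\ref{lem:kang}, and row $\ell$ is computed from row $\ell+1$ by scanning positions of $w$ right to left while maintaining a queue $Q$ of the budget values $d$ that are still pending at the current position. Crucially, the kangaroo evaluation of $\hdist{w_\ell}{w[i-|w_\ell|+1:i]}$ is capped at $|Q|$ (the number of pending budgets), not at a $\Delta$-sized residual, so each $LCS$ query at a position can be charged to one pending pair $(\ell,d)$; the total charge to a fixed $d$ over all $\ell$ telescopes as $\sum_\ell (Suf[\ell+1][d]-Suf[\ell][d])=O(n)$, giving $O(n\Delta)$ overall. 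Your first paragraph (the recurrence with prefix minima, correctness of the decision criterion, and reporting $\hdist{\alpha}{w}$ on accepted instances) and your use of a single $LCS$ oracle built in linear time are fine; the missing idea is this inversion of roles between positions and budgets, without which the stated complexity is out of reach.
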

\begin{proof}
Assume $\alpha=\prod_{i=1}^{M-1} (x_iw_i) x_M$ and let $\alpha_\ell=\prod_{i=\ell}^{M-1} (x_iw_i) x_M$, for $\ell\in\{1,\ldots,M-1\}$.

A first observation is that the problem can be solved in a standard way by dynamic programming in $O(nm)$ time.

We only give the main idea behind this approach. We can compute the minimum number of mismatches $T[i][j]$ which can be obtained when aligning the suffix of length $i$ of $w$ to the suffix of length $j$ of $\alpha$, for all $i\leq n$ and $j\leq m$. Clearly, $T[i][j]$ can be computed based on the values $T[i+1][j+1]$ and, if $\alpha[j]$ is a variable, $T[i+1][j]$. The full technicalities of this standard approach are easy to obtain so we do not go into further details.  

We present a more efficient approach below.

Our efficient algorithm starts with a preprocessing phase, in which we compute $LCS_{w,u}$-data structures, where $u=\prod_{i=\ell}^{M-1} w_i$. This allows us to retrieve in constant time answers to $LCS_{w,w_i}$-queries, for $1\leq i\leq M-1$.

In the main phase of our algorithm, we compute an $(M-1)\times \Delta$ matrix $Suf[\cdot][\cdot]$, where, for $\ell \leq M-1$ and $d\leq \Delta$, we have $Suf[\ell][d]=g$ if and only if $w[g..n]$ is the shortest suffix of $w$ with $\hdist{\alpha_\ell}{w[g:n]}\leq d$.

Once more, we note that the elements of $Suf[\cdot][\cdot]$ can be computed by a relatively straightforward dynamic programming approach in $O(nM\Delta)$ time. But, the strategy we present here is more efficient than that. 

In our algorithm, we first use Lemma \ref{lem:kang} to compute $Suf[M-1][\cdot]$ in $O(n\Delta)$ time. We simply run the algorithm of that lemma on the input strings $w$ and $w_{M-1}$ and the integer $\Delta$. We obtain an array $D[\cdot]$, where $D[i]=\min(\Delta+1, \hdist{w[i-|w_{M-1}|+1:i]}{w_{M-1}})$. We now go with~$j$ from $|w_{M-1}|$ to $n$ and, if $D[j]\leq \Delta$, we set $Suf[M-1][D[j]]=j-|w_{M-1}|+1$. It is clear that $h=Suf[M-1][d]$ will be the starting position of the shortest suffix $w[h:n]$ of $w$ such that $\hdist{w_{M-1}x_M}{w[h:n]}\leq d$. Thus, $Suf[M-1][\cdot]$ was correctly computed, and the time needed to do so is $O(n\Delta)$. 

Further, we describe how to compute $Suf[\ell][\cdot]$ efficiently, based on $Suf[\ell+1][\cdot]$ (for $\ell$ from $M-2$ down to $1$). We use the following approach. We go through the positions $i$ of $w$ from right to left and maintain a queue $Q$. When $i$ is considered, $Q$ stores all elements $d$ such that $Suf[\ell][d]$ was not computed yet until reaching that position, but $i<Suf[\ell+1][d]$. Accordingly, the fact that $d$ is in $Q$ means that with a suitable alignment of $w_\ell$ ending on position~$i$, we could actually find an alignment with $\leq d$ mismatches of $\alpha_\ell$ with $w[i-|w_\ell|+1:n]$: when $Q$ contains $d,\ldots,d-t$, for some $t\geq 0$, an alignment of $w_\ell$ to $w[i-|w_\ell|+1:i]$ with $\leq t$ mismatches would lead to an alignment of $\alpha_\ell$ with $w[i-|w_\ell|+1:n]$ with $\leq d$ mismatches by extending the alignment of $\alpha_{\ell+1}$ to $w[Suf[\ell+1][d-t]:n]$. The values $d$ present in $Q$ at some point are ordered increasingly (the older values are larger), the array $Suf[\ell+1][\cdot]$ is also monotonically increasing, and, as $Suf[\ell][d]$ cannot be set before $Suf[\ell][d']$, for any $d$ and $d'$ such that $d'<d$, the queue $Q$ is actually an interval of integers $[new:old]$, where $new$ is the newest element of $Q$, and $old$ the oldest one. When we consider position $i$ of the word, if the alignment of $w_\ell$ ending on position $i$ causes $t$ mismatches, then to be able to set a value $Suf[\ell][d]$, with $d\in Q$, we need to have that $Suf[\ell+1][d-t]>i$. As $Suf[\ell+1][d]>Suf[\ell+1][d-t]$ and $d\in Q$, this means that $d-t\in Q$, so the number of mismatches $t$ must be strictly upper bounded by $|Q|$, in order to be useful. Accordingly, when considering position $i$, we compute the number $t\gets\min\{\hdist{w_\ell}{w[i-|w_\ell|+1:i]},|Q|\}$, and if $t<|Q|$ we set $Suf[\ell][d]\gets i-|w_\ell|+1$ for all $d$ such that $d-t\in Q$; we also eliminate all these elements $d$ from the queue. Before considering a new position $i$, we check if $i=Suf[\ell+1][new-1]$, and, if yes, we insert $new-1$ in $Q$ and update $new\gets new-1$. \looseness=-1

This computation of $Suf[\ell][\cdot]$ is implemented in the following algorithm:
\begin{enumerate}
    \item Initialization: We maintain a queue $Q$, which initially contains only the $\Delta$. \\
    Let $new \gets \Delta$ (this is the top element of the queue).
    \item Iteration: For $i=Suf[\ell+1][\Delta]-1$ down to $|w_\ell|$ we execute the steps a, b, and c:
    \begin{enumerate}
        \item Using Lemma \ref{lem:kangSimple} we compute $t\gets \min(\hdist{u}{w[i-|w_\ell|+1:i]},|Q|)$.
        \item If $t< |Q|$, we remove from $Q$ all elements $d$, such that $d-t \geq new$, and set, for each of them, $Suf[\ell][d]\gets i-|w_\ell|+1$.
        \item If $Suf[\ell+1][top-1]=i$ then we insert $top-1$ in $Q$ and $top \gets top-1$. Else, if $Suf[\ell+1][top-1]=0$ then set $i\gets 0$ and exit the loop.
    \end{enumerate}
    \item Filling-in the remaining positions: Set all the positions of $Suf[\ell][\cdot]$ which were not filled during the above while-loop to $0$. 
\end{enumerate}

The matrix $Suf[\cdot][\cdot]$ is computed correctly by the above algorithm, as it can be shown by the following inductive argument. \looseness=-1

To show that $Suf[\ell][\cdot]$ is computed correctly by our algorithm, under the assumption that $Suf[\ell+1][\cdot]$ was correctly computed, we make several observations. 

Firstly, it is clear that $Suf[\ell+1][d]\leq Suf[\ell+1][d+1]$. Secondly, when computed correctly, $Suf[\ell][d]$ should be the rightmost position $g$ of $w$ such that $\hdist{w[g:n]}{w_\ell}=t\leq d$ and $Suf[\ell+1][d-t]\geq g+|w_\ell|$. Clearly, if $Suf[\ell][d+1]\neq 0$, then $Suf[\ell][d]< Suf[\ell][d+1]$.

Regarding the algorithm described in the main part of the paper, it is important to observe that the queue $Q$ is ordered increasingly (i.e., the newer is an element in $Q$, the smaller it is) and the elements of $Q$ form an interval $[new:old]$. 

Now, let us show the correctness of the algorithm. 

Let $d$ be a non-negative integer, $d\leq \Delta.$ Assume that our algorithm sets $Suf[\ell][d]=g$, with $g>0$. 

This means that $d$ was removed from the queue in step 2.b when the for-loop was executed for $i=g+|w_\ell|-1$. The reason for this removal was that $\hdist{w[g:g+|w_\ell|-1]}{w_\ell}=t\leq |Q|-1$. Hence, in this step we have removed exactly those elements $\delta$ such that $new\leq \delta-t$. Accordingly, we also have that $new\leq d-t$ holds. Let $g'=Suf[\ell+1][new]$. We thus have $g'>i=g+|w_\ell|-1$,  $\hdist{\alpha_{\ell+1}}{w[g':n]}\leq new$, and $\hdist{w_\ell x_\ell}{w[g:g'-1]} = t$. Putting this all together, we get that $\hdist{\alpha_\ell}{w[g:n]}\leq new+t\leq d$. 

Now, assume for the sake of a contradiction, that there exists $g''>g$ such that $\hdist{\alpha_\ell}{w[g'':n]}\leq d$, i.e., $w[g:n]$ is not the shortest suffix $s$ of $w$ such that $\hdist{\alpha_\ell}{s}\leq d$. In this case, there exists $d''$ such that $g''+|w_\ell|-1< Suf[\ell+1][d'']$ and $d''+\hdist{w[g'':g''+|w_\ell|-1]}{w_\ell}\leq d$. Because $d$ is in $Q$ when $i=g+|w_\ell|-1$ is reached in the for-loop, then $d$ must also be in $Q$ when $i''=g''+|w_\ell|-1$ is reached in the for-loop, because $i<i''<Suf[\ell+1][d'']\leq Suf[\ell+1][d]$. In fact, as $Suf[\ell+1][d]\geq Suf[\ell+1][d''] > i''$, it follows that $d''$ must also be in $Q$ when $i''$ is reached. Thus, $q \geq d-d''$ and, as we have seen above, $d-d''\geq \hdist{w[g'':g''+|w_\ell|-1]}{w_\ell}$. Moreover, if $new''$ is the element on the top of the queue when $i''$ is reached, we have that $new''\leq d''$. Hence, $new'' + \hdist{w[g'':g''+|w_\ell|-1]}{w_\ell} \leq d''+\hdist{w[g'':g''+|w_\ell|-1]}{w_\ell} \leq d$. Therefore, when $i''$ was reached, all the conditions needed to remove $d$ from $Q$ and set $Suf[\ell][d]\gets g''$ were met. We have reached a contradiction with our assumption that $g''>g$. 

In conclusion, if our algorithm sets $Suf[\ell][d]=g$, with $g>0$, then $w[g:n]$ is the shortest suffix of $w$ such that $\hdist{w[g:n]}{w_\ell}\leq d$. By an analogous argument as the one used above in our proof by contradiction, we can show that if our algorithm sets $Suf[\ell][d]=0$ then there does not exist any suffix $w[g:n]$ of $w$ such that $\hdist{w[g:n]}{w_\ell}\leq d$. 

This means that our algorithm computing $Suf[\cdot][\cdot]$ is correct.

To finalize the proof of the theorem, we note that, after computing the entire matrix $Suf[\cdot][\cdot]$, we can accept the instance $w,
\alpha, \Delta$ of $\misMatch_{\regPat}$ if and only if there exists $d\leq \Delta$ such that $Suf[1][d]\neq 0$. Moreover, $\hdist{\alpha}{w}=\min(\{d\mid Suf[1][d]\neq 0\}\cup\{+\infty\})$. 

In the following we show that this algorithm works in $O(n\Delta)$ time. We will compute the complexity of this algorithm using amortized analysis. Firstly, we observe that the complexity of the algorithm is proportional to the total number of $LCS_{w,w_\ell}$-queries we compute in step 2.a, for each $\ell \leq M$ or, in other words, over all executions of the algorithm. Now, we observe that when position $i$ of $w$ is considered (for a certain $\ell$), we do $|Q|$ many  $LCS_{w,w_\ell}$-queries. So, this means that we do one query per each current element of $Q$ (and none if $|Q|=0$). Thus, the number of queries corresponding to each pair $(\ell,d)$ which appears in $Q$ at some point equals the number of positions considered between the step when it was inserted in $Q$ and the step when it was removed from $Q$. This means $O(Suf[\ell+1][d]-Suf[\ell][d])$ queries corresponding to $(\ell,d)$. Summing this up for a fixed $d$ and $\ell$ from $1$ to $M-2$ we obtain that the overall number of queries corresponding to a fixed $\delta$ is $O(Suf[M-1][d])=O(n)$. Adding this up for all $d\leq \Delta$, we obtain that the number of $LCS$-queries performed in our algorithm is $O(n\Delta)$. So, together with the complexity of the initialization of $Suf[M-1][\cdot]$, the complexity of this algorithm is $O(n\Delta)$. 

This algorithm outperforms the other two algorithms solving $\minMisMatch_{\regPat}$ which we mentioned, and, for $\Delta=0$, it is a reformulation of the greedy algorithm solving~$\match_{\regPat}$.
\end{proof}

Now it is not hard to show the following result.
\begin{restatable}{theorem}{minMisMatchThm}\label{thm:minMisMatch}
$\minMisMatch_{\regPat}$ can be solved in $O(n\Phi)$ time, where $\Phi=\hdist{\alpha}{w}$.
\end{restatable}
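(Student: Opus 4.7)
The plan is to apply an exponential-search (doubling) strategy on the parameter $\Delta$, using the algorithm from Theorem \ref{thm:misMatch} as a subroutine. Since we do not know $\Phi=\hdist{\alpha}{w}$ in advance, we cannot directly run the algorithm of Theorem \ref{thm:misMatch} with $\Delta=\Phi$. However, that algorithm has the nice property that when the instance is accepted, it also outputs $\hdist{\alpha}{w}$ itself. This lets us treat it as a test that, given a guess $\Delta$, either certifies $\Phi\leq \Delta$ (and returns $\Phi$) or reports that $\Phi>\Delta$.

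I would proceed as follows. First, run the algorithm of Theorem \ref{thm:misMatch} with $\Delta=0$; this is the greedy exact-matching test in $O(n)$ time and, if it succeeds, we output $\Phi=0$. Otherwise, iterate for $k=0,1,2,\ldots$, invoking the algorithm of Theorem \ref{thm:misMatch} with $\Delta_k=2^k$. For each $k$, this call runs in $O(n\Delta_k)=O(n\cdot 2^k)$ time. As soon as some call accepts, we read off $\Phi$ from its output and stop. Correctness is inherited directly from Theorem \ref{thm:misMatch}: the first accepting call returns exactly $\hdist{\alpha}{w}$. Termination is guaranteed because $\Phi\leq n$, so the procedure halts at latest when $2^k\geq n$.

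For the running time, let $k^\star$ be the smallest index for which $2^{k^\star}\geq \Phi$; then $2^{k^\star}\leq 2\Phi$. The total cost is bounded by the geometric sum
\[
\sum_{k=0}^{k^\star} O(n\cdot 2^k) \;=\; O(n\cdot 2^{k^\star+1}) \;=\; O(n\Phi),
\]
as required. (If $\Phi=0$, the single call with $\Delta=0$ finishes in $O(n)$ time, which is consistent with the claimed bound under the convention that the result is $O(n(\Phi+1))$; otherwise $\Phi\geq 1$ and the displayed estimate applies directly.)

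I do not foresee a real obstacle here; the argument is the standard unbounded-search trick, and its correctness rests entirely on Theorem \ref{thm:misMatch}. The only subtlety worth flagging in the write-up is that we rely on the stronger conclusion of Theorem \ref{thm:misMatch} — namely, that an accepted instance also yields the exact value of $\hdist{\alpha}{w}$ — so that no further binary search inside the final interval $[2^{k^\star-1}+1,2^{k^\star}]$ is required.
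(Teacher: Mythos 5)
Your proposal is correct and matches the paper's own argument: the paper likewise runs the Theorem \ref{thm:misMatch} algorithm with $\Delta=2^i$ for increasing $i$ until it accepts (at which point it also yields $\hdist{\alpha}{w}$), and bounds the cost by the geometric sum $O(n\sum 2^i)=O(n\Phi)$. Your additional handling of the $\Phi=0$ case is a minor refinement, not a different approach.
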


\begin{proof}
We use the algorithm of Theorem \ref{thm:misMatch} for $\Delta=2^i$, for increasing values of $i$ starting with $1$ and repeating until the algorithm returns a positive answer and computes $\Phi=\hdist{\alpha}{w}$. The algorithm is clearly correct. Moreover, the value of $i$ which was considered last is such that $2^{i-1}< \Phi\leq 2^{i}$. So $i=\lceil \log_2 \Phi\rceil$, and the total complexity of our algorithm is $O(n\sum_{i=1}^{\lceil \log_2 \Phi\rceil}2^i)=O(n\Phi)$.
\end{proof}

\subsection{Lower Bounds for $\misMatch_{\regPat}$ and $\minMisMatch_{\regPat}$.}

In order to show that $\minMisMatch_{\regPat}$ and $\misMatch_{\regPat}$ cannot be solved by algorithms running polynomially faster than the algorithms from Theorems \ref{thm:misMatch} and \ref{thm:minMisMatch}, we will reduce the Orthogonal Vectors problem $\OV$ \cite{DBLP:conf/stacs/Bringmann19} to $\misMatch_\regPat$. 
The overall structure of our reduction is similar to the one used for establishing hardness of computing
edit distance~\cite{BackursI18,BringmannK15} or LCS~\cite{BringmannK18}, however we needed to construct
gadgets specific to our problem.
We recall the $\OV$ problem.

\begin{problem}
  \problemtitle{Orthogonal Vectors: $\OV$}
  \probleminput{Two sets $U,V$ consisting each of $n$ vectors from $\set{0,1}^d$, where $d\in \omega(\log n)$.}
  \problemquestion{Do vectors $u \in U, v \in V$ exist, such that $u$ and $v$ are orthogonal, i.e., for all $1 \leq k \leq d$, $v[k] u[k] = 0$ holds?}
\end{problem}
In general, for a vector $u=(u[1],\ldots,u[d]) \in \{0,1\}^d$, the bits $u[i]$ are called coordinates. 
It is clear that, for input sets $U$ and $V$ as in the above definition, one can solve $\OV$ trivially
in $\mathcal{O}(n^2d)$ time. The following conditional lower bound is known for $\OV$.
\begin{lemma}[$\OV$-Conjecture]
$\OV$ can not be solved in $\mathcal{O}(n^{2-\epsilon} d^{c})$ for any $\epsilon > 0$ and constant $c$, unless the Strong Exponential Time Hypothesis (SETH) fails.  
\end{lemma}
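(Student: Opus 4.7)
The plan is to invoke Williams's classical reduction from CNF-SAT to $\OV$. I would fix a putative $\OV$-algorithm with running time $\mathcal{O}(N^{2-\epsilon} d^{c})$ (using $N$ for the number of vectors, to free up $n$ for the SAT side) and show how to use it to decide $k$-SAT in time $\mathcal{O}(2^{(1-\epsilon/2)n} \cdot \mathrm{poly}(n))$ for every $k$, contradicting SETH.

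First, given a $k$-SAT instance $\phi$ on variables $x_1, \ldots, x_n$ with $m$ clauses, I would partition the variables into two halves of size $n/2$. Enumerating the $N := 2^{n/2}$ partial assignments $a$ of the first half produces a set $U \subseteq \{0,1\}^m$ of $N$ vectors, where the $j$-th coordinate of $u_a$ is $1$ iff $a$ \emph{fails} to satisfy the $j$-th clause; the analogous construction on the second half yields $V$ of the same size. The construction takes $\mathcal{O}(Nm)$ time, which will be absorbed.

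The key combinatorial identity is that $u_a$ and $v_b$ are orthogonal (they share no coordinate on which both are $1$) precisely when every clause is satisfied by at least one of the partial assignments $a, b$, i.e., when the combined assignment $(a,b) \in \{0,1\}^n$ satisfies $\phi$. Thus an orthogonal pair in $U \times V$ exists if and only if $\phi$ is satisfiable. Feeding $(U, V)$ into the hypothetical $\OV$-algorithm decides $\phi$ in time $\mathcal{O}(N^{2-\epsilon} m^c) + \mathcal{O}(Nm) = \mathcal{O}(2^{(1-\epsilon/2)n} \cdot \mathrm{poly}(n))$. Choosing a SETH-witness $k$ large enough relative to $\epsilon/2$ then yields the desired contradiction. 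The dimension requirement $d \in \omega(\log N)$ is automatic, since we may pad $\phi$ to guarantee $m \geq \log^2 n$ without changing satisfiability and $\log N = n/2$.

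The main (and rather mild) obstacle is just bookkeeping the quantifiers: given arbitrary $\epsilon > 0$ and constant $c$ from the hypothetical $\OV$-algorithm, one must choose the SETH-witness $k$ according to $\epsilon/2$ and verify that the polynomial overhead $m^c = \mathrm{poly}(n)$ is dominated by the sub-exponential saving $2^{-(\epsilon/2)n}$. No deeper combinatorial idea is required beyond Williams's construction.
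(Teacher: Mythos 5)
Your proposal is correct and follows exactly the standard split-and-list reduction of Williams from CNF-SAT to $\OV$, which is precisely the argument behind this lemma: the paper does not reprove it but states it as a known result, citing Williams (2005) and Bringmann's survey, so there is no divergence in approach. One small point of bookkeeping: the dimension requirement is $d\in\omega(\log N)$ with $N=2^{n/2}$, i.e.\ you need $m\in\omega(n)$ (achievable by duplicating clauses or appending all-zero coordinates, neither of which affects orthogonality), not merely $m\geq\log^2 n$ as written.
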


See \cite{DBLP:conf/stacs/Bringmann19,DBLP:journals/tcs/Williams05} and the references therein for a detailed discussion regarding conditional lower bounds related to OV. In this context, we can show the following result.
\begin{theorem}
$\misMatch_{\regPat}$ can not be solved in $\mathcal{O}(|w|^h \Delta^g)$ time (or in $\mathcal{O}(|w|^h |\alpha|^g)$ time) with $h+g= 2-\epsilon$ for some $\epsilon>0$, unless the $\OV$-Conjecture fails. 
\end{theorem}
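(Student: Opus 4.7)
The approach is a fine-grained reduction from $\OV$ to $\misMatch_{\regPat}$. Given an OV instance $(U, V)$ with $|U| = |V| = n$ and vectors of dimension $d = \omega(\log n)$, I plan to construct, in time linear in $nd$, a regular pattern $\alpha$, a word $w$, and a threshold $\Delta$, all of size $\Theta(nd)$, such that $\hdist{\alpha}{w} \leq \Delta$ holds iff there exist $u \in U$ and $v \in V$ with $v \perp u$. A hypothetical algorithm for $\misMatch_{\regPat}$ running in time $|w|^h \Delta^g$ (or in time $|w|^h |\alpha|^g$) with $h + g = 2 - \epsilon$ would then decide OV in time $O((nd)^{2-\epsilon}) = O(n^{2-\epsilon} d^{2-\epsilon})$, contradicting the $\OV$-Conjecture (which allows any constant power of $d$).

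The construction has two layers, in the spirit of the Backurs--Indyk style OV reductions for string-similarity problems. At the inner level, I would design short coordinate gadgets: over a constant-size terminal alphabet pick length-$2$ encodings $p(0), p(1)$ of $V$-bits and $w(0), w(1)$ of $U$-bits so that $\hdist{p(a)}{w(b)} = 1 + a \cdot b$ for all $a,b \in \{0,1\}$. Concatenating across the $d$ coordinates gives vector gadgets $P(v) = p(v[1])\cdots p(v[d])$ and $W(u) = w(u[1])\cdots w(u[d])$, each of length $2d$, with the clean identity $\hdist{P(v)}{W(u)} = d + v \cdot u$. Thus $\hdist{P(v)}{W(u)} = d$ precisely when $v \perp u$, and is at least $d + 1$ otherwise. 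At the outer level, I assemble $\alpha$ as a regular pattern of the shape $\alpha = x_0 \, G \, P(v_1) \, G \, x_1 \, G \, P(v_2) \cdots G \, P(v_n) \, G \, x_n$, where $G$ is a short block of a fresh guard symbol that pins each $P(v_i)$-block to an intended position in $w$; the word $w$ likewise interleaves the $U$-vector gadgets with guards, together with auxiliary padding and self-alignment regions that will fix the global baseline.

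The main obstacle is that Hamming distance is additive over the pattern gadgets, so the most naive assembly forces $\hdist{\alpha}{w}$ to equal $nd + \sum_i \min_j (v_i \cdot u_j)$, thereby encoding the stronger \emph{for-all-exists} variant of OV rather than the existential question. To correct this, I would include in $w$ a dedicated \emph{absorber} region containing exact copies of all $P(v_i)$-gadgets in the same order as they appear in $\alpha$, separated by guards, so that the order-preserving nature of any low-cost alignment of a regular pattern forces $n-1$ of the pattern's $V$-gadgets to match their self-copies at zero cost, leaving exactly one $P(v_{i})$ to be aligned against the $U$-encoding region of $w$; the alignment is then free to choose both the active index $i$ and the target $W(u_j)$. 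With this design the optimum cost equals a fixed baseline $B$ plus $d + \min_{i,j}(v_i \cdot u_j)$, and setting $\Delta = B + d$ exactly captures the OV answer. The technical heart of the argument is dimensioning the guards, the absorber region, and the padding so that no unintended alignment is cheaper; this amounts to showing that the guard blocks are long enough to make any cross-guard misalignment prohibitive while still keeping $|w|, |\alpha|, \Delta \in \Theta(nd)$, from which the claimed conditional lower bound follows.
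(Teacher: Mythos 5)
Your overall framework is sound and matches the paper's: a linear-size reduction from $\OV$ to $\misMatch_{\regPat}$ with $|w|,|\alpha|,\Delta\in\Theta(nd)$, coordinate gadgets calibrated so that an aligned pair costs one baseline unit plus an OV-dependent excess (your $1+a\cdot b$ of length $2$ versus the paper's $1$-vs-$3$ of length $3$ is an inessential difference), vector gadgets obtained by concatenation, pattern variables used as spacers, and the final step converting an $\mathcal{O}(|w|^h\Delta^g)$ or $\mathcal{O}(|w|^h|\alpha|^g)$ algorithm with $h+g=2-\epsilon$ into an $\mathcal{O}((nd)^{2-\epsilon})$ algorithm for $\OV$. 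You also correctly identify the central obstacle: Hamming cost is additive over the $n$ pattern gadgets, so a naive assembly encodes the for-all-exists variant rather than the existential one.

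However, your proposed fix --- an absorber region in $w$ containing exact copies of all $P(v_i)$ in the same order --- does not work, and the failure is structural, not a matter of dimensioning guards. In a regular pattern every variable occurs once and can be substituted by an arbitrary string, so the only constraints on an alignment are left-to-right order and total length; nothing "pins" a terminal block to a region except cost. If every $P(v_i)$ has a zero-cost self-copy in the absorber, and these copies appear in the same order as in $\alpha$, then the alignment that sends \emph{all} $n$ gadgets to their self-copies and lets the spacer variables swallow the entire $U$-encoding region is feasible and achieves the bare baseline $B$, independently of whether an orthogonal pair exists. Your claim that order-preservation "forces $n-1$ of the gadgets to their self-copies, leaving exactly one" to visit the $U$-region has no mechanism behind it, so the asserted identity (optimum $=B+d+\min_{i,j} v_i\cdot u_j$) fails and the reduction does not decide $\OV$. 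The missing idea, which is how the paper resolves exactly this issue, is to make the default home of a gadget cost \emph{one more} than an orthogonal match rather than zero: the paper appends to every $A(i)$ a dummy block $\mathtt{bbb}X$ built from a gadget $X'=\mathtt{010}$ at Hamming distance exactly $1$ from every coordinate gadget $B'(j_k)$, so each pattern gadget costs exactly $d+1$ when parked at a dummy block, exactly $d$ when aligned to an orthogonal $A(i)$, and at least $d+2$ in every other placement; the sequence of $A(i)$'s is written twice in $w$ so that, whichever pair $(i,j)$ is orthogonal, there are enough dummy blocks before and after the chosen $A(i)$ to host the remaining $n-1$ gadgets. The threshold $\Delta=n(d+1)-1$ then detects the saving of a single mismatch. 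Your construction could be repaired along these lines (replace exact self-copies by dummy blocks at calibrated distance $d+1$ and duplicate the $U$-region), but as written the key forcing step is absent.
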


\begin{proof}
We reduce $\mathtt{OV}$ to $\minMisMatch_\regPat$. For this, we consider an instance of $\OV$: $U=\{u_1, \ldots, u_n\}$ and $V=\{v_1, \ldots, v_n\}$, with $U,V \subset \set{0,1}^d$. We transform this $\OV$-instance into a $\misMatch_{\regPat}$-instance $(\alpha,w,\Delta)$, where $\Delta=n(d + 1)-1$. More precisely, we ensure that for the respective $\misMatch_\regPat$-instance, there exists a way to replace the variables with strings leading to exactly $n(d + 1)$ mismatches between the image of $\alpha$ and $w$ if and only if no two vectors $u_i$ and $v_j$ are orthogonal. But, if there exists at least one orthogonal pair of vectors $u_i$ and $v_j$, there also exists a way to replace the variables of $\alpha$ such that the resulting string has strictly less than $n(d+1)$ mismatches to $w$. Both $|w|$ and $|\alpha|$ are in $\mathcal{O}(nd)$, and can be built in $O(nd)$ time. The reduction consists of three main steps. First we will present a gadget for encoding the single coordinates of vectors $u_i$ and $v_i$ from $U$ and $V$, respectively. Then we will show another gadget to encode a full vector of each respective set. And, finally, we will show how to assemble these gadgets of the vectors from set $U$ into the word $w$ and from $V$ into $\alpha$.

\textbf{First gadget}.
Let $u_i=(u_{i}[1], u_{i}[2],\ldots,u_{i}[d])  \in U, v_j=(v_{j}[1], v_{j}[2],\ldots,v_{j}[d]) \in V$ and let $k$ be a position of these vectors. We define the following gadgets:\\
\centerline{$    A'(i_k)= \begin{cases}
        \mathtt{001},& \text{ if} ~u_{i}[k] = 0.\\
        \mathtt{100},& \text{ if}~u_{i}[k] = 1.
    \end{cases} \quad
    B'(j_k) =\begin{cases}
        \mathtt{000},\quad \text{if}~v_{j}[k] = 0.\\
        \mathtt{011},\quad \text{if}~v_{j}[k] = 1.
    \end{cases} \quad
$}

Note that, when aligned, the pair of strings $(A'(i_k), B'(j_k))$ produces exactly one mismatch if and only if $ u_{i}[k] \cdot v_{j}[k] = 0$; otherwise it produces three mismatches. So, $A'(i_k)$ and $B'(j_k)$ encode the single coordinates of $u_i$ and $v_j$ respectively. 

Further, we construct a gadget $X' = \mathtt{010}$ that produces always one mismatch if aligned to any of the strings $B'(j_k)$ corresponding to coordinates $v_j[k]$. 
See also Figure \ref{fig:regulargadgets}. 
\begin{figure}[h!]
    \centering
    \begin{tikzpicture}[scale=0.7, transform shape]
        \node[draw, circle, label=left:{$A'(i_k) = 0$}] (A0) at (0,0) {$\mathtt{001}$};
        \node[draw, circle, below=8mm of A0, label=left:{$A'(i_k) = 1$}] (A1) {$\mathtt{100}$};
        \node[draw, circle, right=30mm of A0, label=right:{$B'(j_k) = 0$}] (B0) {$\mathtt{000}$};
        \node[draw, circle, below=8mm of B0, label=right:{$B'(j_k) = 1$}] (B1) {$\mathtt{011}$};
        \node[draw, circle, below right=4mm of B0, label=right:{$X'$}, xshift=10mm] (X) {$\mathtt{010}$};

        \draw[-] (A0) -- node[below]{$1$} (B0);
        \draw[-] (A0) -- node[left, xshift=-5mm, yshift=-1mm]{$1$} (B1);
        \draw[-] (A1) -- node[right, xshift=5mm, yshift=-1mm]{$1$} (B0);
        \draw[-] (A1) -- node[above]{$3$} (B1);
        \draw[-] (B0) -- node[below]{$1$} (X);
        \draw[-] (B1) -- node[above]{$1$} (X);
        
    \end{tikzpicture}
    \caption{Gadgets for the encoding of single coordinates of the vectors. On each edge we wrote the number of mismatches between the strings in the nodes connected by that edge.}
    \label{fig:regulargadgets}
\vspace{-5pt}
\end{figure}
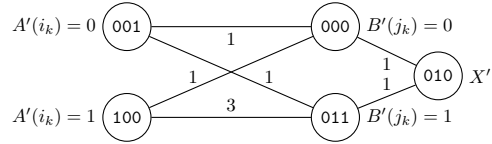

\textbf{Second gadget}.
The gadget $A(i)$ encodes the vector $u_i$, for $1\leq i\leq n$, while the gadget $B(j)$ encodes the vector $v_j$, for $1\leq j\leq n$. We  construct these gadgets such that aligning $B(j)$ to $A(i)$ with a minimum number of mismatches yields exactly $d$ mismatches, if the two corresponding vectors are orthogonal, and exactly $d+1$ mismatches, otherwise. Moreover, we show that any other alignment of the gadgets $B(j)$ with other factors of $w$ yields more mismatches.\looseness=-1

In order to assemble the gadgets $A(i)$ and $B(j)$, for $1\leq i,j\leq n$, we extend the terminal alphabet by three new symbols $\set{\mathtt{a},\mathtt{b},\singlehash}$, as well as use two fresh variables $x_j, y_j$ for each vector $v_j$. The gadgets $A(i)$, for all $i$, and, respectively, the gadgets $B(j)$, for all $j$, consist of the concatenation of the coordinate gadgets $A'(i_k)$ and, respectively, $B'(j_k)$ from left to right, in ascending order of $k$. Each two such consecutive gadgets $A'(i_k)$ and $A'(i_{k+1})$ (respectively, $B'(j_k)$ and $B'(j_{k+1})$) are separated by $\triphash$. We prepend to $A(i)$ the string $\mathtt{bba}$ and append the string $\mathtt{bbb}X$, where $X=(X'\triphash)^{d-1}X'$. In the case of $B(j)$, we prepend $x_j\mathtt{bba}$ and append $y_j$. The full gadgets $A(i)$ and $B(j)$ are defined as follows.
\begin{itemize}
    \item $A(i)=\mathtt{bba}A'(i_1)\triphash A'(i_2)\triphash \ldots A'(i_d)\mathtt{bbb}X$ 
    \item $B(j)= x_j\mathtt{bba}B'(j_1)\triphash B'(j_2)\triphash \ldots B'(j_d)y_j$. 
\end{itemize}
For simplicity of the exposure, let $B'(j)= \mathtt{bba}B'(j_1)\triphash B'(j_2)\triphash \ldots \triphash B'(j_d)$. 

Note that $|A(i)|$ is the same for all $i$, so we can define $M=|A(i)|$.

\textbf{Final assemblage}.
To define the word $w$, we use a new terminal $\mathtt{\$}$. The word $w$ is:
\begin{itemize}
    \item $w =$ $\$^MA(1)\$^{M}A(2)\$^{M}\ldots A(n)\$^{M}A(1)\$^{M}A(2)\ldots \$^{M}A(n)\$^M$
\end{itemize}
To define $\alpha$, we use two new fresh variables $x$ and $y$. The pattern $\alpha$ is:
\begin{itemize}
    \item $\alpha =$ $x\$^{M} B(1)\$^{M}B(2)\$^{M}\ldots \$^{M}B(n)\$^{M}y$.
\end{itemize}

{\bf The correctness of the reduction.} 
We show that there exists a way to align $\alpha$ with $w$ with $<n(d+1)$ mismatches if and only if a pair of orthogonal vectors $u_i\in U$ and $v_j\in V$ exists. Otherwise, there exists an alignment of $\alpha$ to $w$ with exactly $n(d+1)$ mismatches. 

To formally prove that the reduction fulfills this requirement, we proceed as follows.

A general idea: the repetition of the gadgets $A(i)$ in the word $w$ guarantees that, if needed, a pair of gadgets $A(i)$ and $B(j)$, corresponding to the vectors $u_i\in U$ and, respectively, $ v_j \in V$, can be aligned. More precisely, we can align $B'(j)$ to $\mathtt{bba}A'(i_1)\triphash \ldots A'(i_d)$. The variables $x,y$ and $x_j,y_j$, for $j\in\{1,\ldots,n\}$, act as spacers: they allow us to align a string $B'(j)$ to the desired factor of $w$. This kind of alignment is enough for our purposes, as we only need to find one orthogonal pair of vectors, not all of them; however, we need enough space in $w$ for the factors of $\alpha $ occurring before and after $B'(j)$, thus the repetition of the $A(i)$ gadgets. 

We now analyse how a factor $B'(j)$ can be aligned to a factor of $w$. The main idea is to show that if there are no orthogonal vectors, then any alignment of $B'(j)$ to a factor of $w$ creates at least $d+1$ mismatches. Otherwise, we can align it with $d$ mismatches only. 
    
\noindent {\em Case 1:} $B'(j)$ is aligned to a factor $w[i:h]$ of $w$ which starts with $\$ $. Then the prefix $\mathtt{bba}$ of $B'(j)$ causes at least two mismatches, as the first $\mathtt{b}$ in  $\mathtt{bba}$ is aligned to a $\$ $ letter, while the $\mathtt{a}$ is aligned to either a $\mathtt{b}$ letter (from a $\mathtt{bba}$ factor) or a $\$ $ letter. The rest of $B'(j)$ causes, overall, at least $d$ mismatches, one per each group $B'(j_k)$. So, in this case, we have at least $d+2$ mismatches caused by $B'(j)$.

\noindent {\em Case 2:} $B'(j)$ is aligned a factor $w[i:h]$ of $w$ which ends with $\$ $. Then, its prefix $\mathtt{bba}$ cannot be aligned to a factor $\mathtt{bba}$ of $w$. So, the $\mathtt{a}$ of the prefix $\mathtt{bba}$ of $B'(j)$ produces one mismatch, while the suffix $B'(j_d)$ causes at least $2$ mismatches. The rest of $B'(j)$ causes at least $d-1$ mismatches, one per each remaining group $B'(j_k)$. So, in this case, we have again at least $d+2$ mismatches caused by $B'(j)$.
    
\noindent {\em Case 3:} $B'(j)$ is aligned exactly to the factor $\mathtt{bba}A'(i_1)\triphash \ldots A'(i_d)$ and $u_i$ and $v_j$ are orthogonal, then $B'(j)$ causes exactly $d$ mismatches.
    
\noindent {\em Case 4:} $B'(j)$ is aligned exactly to the factor $\mathtt{bba}A'(i_1)\triphash  \ldots A'(i_d)$ and $u_i$ and $v_j$ are not orthogonal, then $B'(j)$ causes at least $d+2$ mismatches.
    
\noindent {\em Case 5:} $B'(j)$ is aligned exactly to the factor $\mathtt{bbb}X$, then $B'(j)$ causes $d+1$ mismatches.
    
\noindent {\em Case 6:} $B'(j)$ is aligned to a factor starting strictly inside $\mathtt{bba}A'(i_1)\triphash \ldots A'(i_d)$, then the prefix $\mathtt{bba}$ of $B'(j)$ cannot be aligned to a factor $\mathtt{bba}$ of $w$, so it causes at least two mismatches (from the alignment of $\mathtt{ba}$). The rest of $B'(j)$ causes at least $d$ mismatches, one per each group $B'(j_k)$. So, overall, $B'(j)$ causes at least $d+2$ mismatches in this case. 

To ease the understanding, cases 3 and 4 are illustrated in the following table: when aligning $A(i)$ to $B(j)$, to obtain the desired number of mismatches, we can match the parts of $A(i)$ to the parts of $B(j)$ as described in this table in the two cases $3.$ and $4$. 

\noindent 
\begin{tabular}{|l|l|l|l|l|l|}
\hline
Gadget &\!\!I &\!\!II &\!\!III &\!\!IV &\!\!mismatches\\ 
\hline
$A(i)=$\!\! &\!\!$\varepsilon$ & \!\!$\mathtt{bba}A'(i_1)\triphash \! \ldots\! \triphash A'(i_d)$ &\!\!$\mathtt{bbb}X'\ \ \ \ \ \triphash\! \ldots\! \triphash X'$ &\!\!$\varepsilon$ & \\ 
$3.~B(j)=$\!\! &\!\!$x_j$ &\!\!$\mathtt{bba}B'(j_1)\triphash \! \ldots\! \triphash B'(j_d)$ &\!\!$y_j$ &\!\!$\varepsilon$ &\!\!$d$ (in II)\\  
$4.~B(j)=$\!\! &\!\!$\varepsilon$ &\!\!$x_j$ &\!\!$\mathtt{bba}B'(j_1)\triphash \! \ldots\! \triphash B'(j_d)$ &\!\!$y_j$ &\!\!$d+1$ (in IV)\\
\hline
\end{tabular}

Wrapping up, there are no other ways than those described in cases 1-6 above in which $B'(j)$ can be aligned to a factor of $w$. In particular, in order to reach an alignment with at most $n(d+1)-1$ mismatches, at least one $B'(j)$ should be aligned to a factor of $w$ such that it only causes $d$ mismatches (as in case 3). Thus, in that case we would have a pair of orthogonal vectors. Conversely, if there exist $u_i$ and $v_j$ which are orthogonal and $i\geq j$, then we can align $B'(j)$ to the occurrence of $\mathtt{bba}A'(i_1)\triphash  \ldots A'(i_d)$ from the first $A(i)$ and all the other gadgets $B'(\ell)$ to factors $\mathtt{bbb}X$, and obtain a number of $n(d+1)-1$ mismatches. Note that such an alignment is possible as there exist at least $j-1$ factors $\mathtt{bbb}X$ before the first $A(i)$ and at least $n$ more occurrences of $\mathtt{bbb}X$ after it; moreover the variables $x_\ell$ and $y_\ell$ can be used to align as desired the strings $B'(v_\ell)$ to the respective $\mathtt{bbb}X$ factors of $w$. If there exist $u_i$ and $v_j$ which are orthogonal and $i< j$, then we can align $B'(j)$ to the occurrence of $\mathtt{bba}A'(i_1)\triphash A'(i_2)\triphash \ldots A'(i_d)$ from the second $A(i)$ and all the other gadgets $B'(\ell)$ to factors $\mathtt{bbb}X$, and obtain again a number of $n(d+1)-1$ mismatches. This is possible for similar reasons to the ones described above. 


This shows that our reduction is correct. The instance of $\OV$ defined by $U$ and $V$ contains two orthogonal vectors if and only the instance of $\misMatch_{\regPat}$ defined by $w, \alpha, $ and $\Delta=n(d+1)-1$ can be answered positively. 
Moreover, the instance of $\misMatch_{\regPat}$ can be constructed in $O(nd)$ time and we have that $|w|,|\alpha|,\Delta \in \Theta(nd)$.

Assume now that there exists a solution of $\misMatch_{\regPat}$ running in $O(|w|^g|\alpha|^h)$ with $g+h=2-\epsilon$ for some $\epsilon<0$. This would lead to a solution for $\OV$ running in $O(nd + (nd)^{2-\epsilon})$, a contradiction to the $\OV$-conjecture. Similarlty, if there exists a solution of $\misMatch_{\regPat}$ running in $O(|w|^g\Delta ^h)$ with $g+h=2-\epsilon$ for some $\epsilon<0$, then there exists a solution for $\OV$ running in $O(nd + (nd)^{2-\epsilon})$, a contradiction to the $\OV$-conjecture. This proves our statement.
\end{proof}

\begin{remark}
An immediate consequence of the previous theorem is that $\minMisMatch_{\regPat}$ can not be solved in $\mathcal{O}(n^h \hdist{\alpha}{w}^g)$ time (or in $\mathcal{O}(|w|^h |\alpha|^g)$ time) with $h+g= 2-\epsilon$ for some $\epsilon>0$, unless the $\OV$-Conjecture fails. Thus, as $\hdist{\alpha}{w}\leq |\alpha|$, $\minMisMatch_{\regPat}$ and $\misMatch_{\regPat}$ cannot be solved polynomially faster than our algorithms, unless the $\OV$-Conjecture fails. 
\end{remark}

\section{Patterns with Repeated Variables}\label{sec:repeated}

In Section \ref{sec:Reg} we have shown that if no variable occurs more than once in the input pattern $\alpha$, then the problems $\misMatch$ and $\minMisMatch$ can be solved in polynomial time. Let us now consider patterns where variables are allowed to occur more than once, i.e., patterns with repeated variables. Firstly, we recall two measures of the structural complexity of patterns. 

For every variable $x\in \var(\alpha)$, the scope of $x$ in $\alpha$ is defined by $\mathtt{sc}_\alpha(x)=[i:j]$, where $i$ is the leftmost and $j$ the rightmost occurrence of $x$ in $\alpha$. The scopes of the variables $x_1,\ldots ,x_k\in \var(\alpha)$ coincide in $\alpha$ if $\cap_{i=1}^k \mathtt{sc}(x_i)\neq \emptyset$. By $\scd(\alpha)$ we denote the scope coincidence degree of $\alpha$: the maximum number of variables in $\alpha$ whose scopes coincide. By $\kScdPat$ we denote the class of patterns whose scope coincidence degree is at most $k$. 

Given a pattern $\alpha$, with $p$ variables, a marking sequence of $\alpha$ is an ordering $x_1<x_2<\ldots<x_p$ of $\var(\alpha)$. The skeleton $\alpha_{var}$ of $\alpha$ is obtained from $\alpha$ by removing all the terminals. A marking of $\alpha_{var}$ w.r.t. a marking sequence $x_1<x_2<\ldots<x_p$ of $\alpha$ is a $p$-steps procedure: in step $i$ we mark all occurrences of variable $x_i$. The pattern $\alpha $ is called $k$-local if and only if there exists a marking sequence of $x_1<x_2<\ldots<x_p$ of $\alpha$ such that, for $i$ from $1$ to $p$, the variables marked in the first $i$ steps of the marking of $\alpha_{var}$ w.r.t. this marking sequence form at most $k$ non-overlapping length-maximal factors in $\alpha_{var}$; the respective marking sequence is called witness for the $k$-locality of $\alpha$. By $\kLocPat$ we denote the class of $k$-local patterns. See \cite{DayFMN17,CaselDFKMS19} for an extended discussion and examples regarding $k$-locality.

Several more particular classes which we consider in this context are the following:
 \begin{itemize}
\item 
The class of unary patterns $\oneVarPat$: $\alpha \in \oneVarPat$ if there exists $x\in X$ such that $\var(\alpha)= \{x\}$;  example: $\alpha_1=\mathtt{ab} x \mathtt{ab} xx \mathtt{baab}\in \oneVarPat$.
\item 
The class of one-repeated-variable patterns $\oneRepPat$: $\alpha \in \oneRepPat$ if there exists at most one variable $x\in X$ such that $|\alpha|_x>1$; example: $\alpha_2=\mathtt{ab} x y \mathtt{ab}z xx \mathtt{baab} v\in \oneRepPat$.
\item The class $\nonCrossPat=\oneScdPat$, called the class of non-cross patterns; as examples, consider  $\alpha_3=\mathtt{ab} xx y \mathtt{ab}zzz \mathtt{bb} vvv \mathtt{ab} v u\in \nonCrossPat\setminus \oneRepPat$ and $\alpha_4=\mathtt{ab} x y \mathtt{ab}z xx \mathtt{bb} v \mathtt{ab} x\in \oneRepPat\setminus \nonCrossPat$. Note that $\alpha \in \nonCrossPat$ if and only if $\alpha$ can be written as the concatenation of several $\oneVarPat$-patterns, whose variables are pairwise distinct. Thus, $\nonCrossPat$-patterns are $1$-local. 
\end{itemize}

Note that in a $\nonCrossPat$-pattern $\alpha$, for any two variables $x,y\in \var(\alpha)$, where the last occurrence of $y$ is to the right of the first occurrence of $x$ in $\alpha$, we can actually write $\alpha = \beta x \gamma y \delta $ such that $x,y\notin \var(\gamma)$, $x\notin \var(\delta)$, and $y\notin \var(\beta)$. In other words, there are no interleaved occurrences of two variables. Moreover, if $\alpha\in \nonCrossPat$, then $\alpha$ is $1$-local: the marking sequence is obtained by ordering the variables according to the position of their first occurrence. 

Clearly, $\oneVarPat \subset \oneRepPat$ and $\oneVarPat \subset \nonCrossPat$, but $\oneRepPat$ and $\nonCrossPat$ are incomparable. Indeed, if $\alpha\in \nonCrossPat$ then $\alpha$ is $1$-local and $\oneRepPat$ contains patterns $\alpha$ with $\scd(\alpha)=2$. 

Now we briefly discuss the examples mentioned above.

Then, $\alpha_1=\mathtt{ab} x \mathtt{ab} xx \mathtt{baab}\in \oneVarPat$ ($x$ is the single variable). 

Secondly, $\alpha_2=\mathtt{ab} x y \mathtt{ab}z xx \mathtt{baab} v$, with $\var(\alpha_2)=\{x,y,z,v\}$, is in $\oneRepPat$ ($x$ is the repeated variable) but not in $\oneVarPat$ nor in $\nonCrossPat$, as $\scd(\alpha_2)=2$ and, more intuitively, the occurrences of $x$ are interleaved with those of the other variables. 

Then, $\alpha_3=\mathtt{ab} xx y \mathtt{ab}zzz \mathtt{bb} vvv \mathtt{ab} v u$, with $\var(\alpha_3)=\{x,y,z,v,u\}$, is in $\nonCrossPat$, but not in $\oneRepPat$ as each of $x,z,$ and $v$ occurs at least twice. 

Finally, $\alpha_4=\mathtt{ab} x y \mathtt{ab}z xx \mathtt{bb} v \mathtt{ab} x$ is in $\oneRepPat$ but it is not a non-cross pattern as $\scd(\alpha_4)=2$ and, for instance, we cannot write it as $\alpha_4 = \beta x \gamma v \delta $ such that $x,v\notin \var(\gamma)$, $x\notin \var(\delta)$, and $v\notin \var(\beta)$, i.e., we cannot separate the occurrences of the variables $x$ and $v$ -- they are interleaved. The pattern $\alpha_4$ is $2$-local, as witnessed, for instance, by the marking sequence $v<x<y<z$. 

Further, if $\alpha$ is a pattern and $x\in \var(\alpha)$, then an $x$-block is a factor $\alpha[i:j]$ such that 
$\alpha[i:j]\in \oneVarPat$ with $\var(\alpha[i:j])=x$ and it is length-maximal with this property: it cannot be extended to the right or to the left without introducing a variable different from $x$. 

The next lemma is fundamental for the results of this section.

\begin{restatable}{lemma}{lemmedian}\label{lem:median}
Given a set of words $w_1,\ldots,w_p\in \Sigma^m$, we can find in $O(|\Sigma|+mp)$ a {\em median string} for $\{w_1,\ldots,w_p\}$, i.e. a string $w$ such that $\sum_{j=1}^{p}\hdist{w_i}{w}$ is minimal.
\end{restatable}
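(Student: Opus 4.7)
The plan is to exploit the fact that Hamming distance decomposes coordinate-wise: for any candidate median $w$, we have $\sum_{j=1}^{p}\hdist{w_j}{w}=\sum_{i=1}^{m}|\{j:w_j[i]\neq w[i]\}|$. Since the choice of $w[i]$ only affects the $i$-th term of this outer sum, we can minimize each term independently. The optimal choice at position $i$ is any letter that appears most often among $w_1[i],\ldots,w_p[i]$ (the plurality letter), because picking that letter makes $|\{j:w_j[i]\neq w[i]\}|$ equal to $p$ minus the maximum frequency.

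The main step is therefore: for each position $i\in[1:m]$, compute a plurality letter of the multiset $\{w_1[i],\ldots,w_p[i]\}$ and assign it to $w[i]$. The obstacle is that naively allocating and initializing a frequency array of size $|\Sigma|$ at every position would cost $\Theta(|\Sigma|m)$, which exceeds the budget when $|\Sigma|$ is large relative to $p$. To meet the $O(|\Sigma|+mp)$ bound I would allocate a single integer array $C[1:|\Sigma|]$, initialized to zero once in $O(|\Sigma|)$ time, and reuse it across positions in a careful way: when processing position $i$, I scan the $p$ letters $a_j=w_j[i]$, incrementing $C[a_j]$ and tracking the running argmax over the letters just touched; after assigning $w[i]$ to that argmax, I perform a second pass over the same $p$ letters and decrement $C[a_j]$ back to zero. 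Each position thus takes $O(p)$ time and leaves $C$ in its all-zero state, ready for the next position.

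Summing, the total cost is $O(|\Sigma|)$ for the one-time initialization plus $O(p)$ per position for $m$ positions, giving $O(|\Sigma|+mp)$ as required. Correctness of the per-position choice follows from the decomposition observed above; correctness of the global output follows because an independent minimum at each coordinate yields the global minimum of the (separable) sum. The only subtlety worth double-checking is that the running argmax is computed only over the (at most $p$) letters actually touched at position $i$, so we never need to scan all of $\Sigma$ during the inner loop; this is what prevents a hidden $|\Sigma|$ factor per position and keeps the bound tight.
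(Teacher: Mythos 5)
Your proposal is correct and follows essentially the same route as the paper: a single counter array $C$ indexed by $\Sigma$, initialized once in $O(|\Sigma|)$, used to pick the plurality letter at each position, and then reset only on the touched entries so each position costs $O(p)$. Your write-up is in fact slightly more explicit than the paper's (the coordinate-wise decomposition and the argmax-over-touched-letters detail are left implicit there), but there is no substantive difference.
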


\begin{proof}
We will use an array $C$ with $\Sigma$ elements, called counters, indexed by the letters of $\Sigma$, and all initially set to $0$. 
For each $i$ between $1$ and $m$, we count how many times each letter of $\Sigma$ occurs in the multi-set $\{w_1[i],w_2[i],\ldots, w_p[i]\}$ using $C$. Let $w[i]$ be the most frequent letter of this multi-set. After computing $w[i]$, we reset the counters which were changed in this iteration, and repeat the algorithm for $i+1$. After going through all values of $i$, we return the word $w=w[1]w[2]\ldots w[m]$ as the answer to the problem. 
The correctness of the algorithm is immediate, while its complexity is clearly $O(|\Sigma|+mp)$. 
\end{proof}

The typical use of this lemma is the following: we identify the factors of $w$ to which a repeated variable is aligned, and then compute the optimal assignment of this variable. Based on this, the following theorem can now be shown. 
\begin{restatable}{theorem}{thmOneVarPat}\label{thm:1VarPat}
$\minMisMatch_{\oneVarPat}$ and $\misMatch_{\oneVarPat}$ can be solved in $O(n)$ time.
\end{restatable}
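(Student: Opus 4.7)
My plan is to exploit the very rigid structure of $\oneVarPat$-patterns: since they have only one variable $x$, any substitution $h$ is completely determined by the single string $s=h(x)$, and, crucially, the length of $s$ is forced by the length of $w$.

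Write $\alpha$ in the form $x^{a_0} w_0 x^{a_1} w_1 \cdots w_{k-1} x^{a_k}$, where the $w_i$ are non-empty terminal strings and, by our standing assumption, $a_0,a_k\geq 1$. Let $N=\sum_i a_i=|\alpha|_x$ and $T=\sum_i|w_i|$. Any substitution producing a word of length $n$ must satisfy $\ell N+T=n$, where $\ell=|s|$; this uniquely determines $\ell$ (when a non-negative integer solution exists; otherwise $\hdist{\alpha}{w}=+\infty$ and both problems are answered trivially). The layout of $h(\alpha)$ is then fully forced: each position $i\in\{1,\ldots,n\}$ either falls inside a fixed terminal block $w_t$ (producing a mismatch contribution independent of $h$), or is the $j$-th letter of some copy of $s$ for a unique $j\in\{1,\ldots,\ell\}$. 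The key observation is that the choice of $s[j]$ only affects mismatches at positions aligned to position $j$ of $s$, and the $\ell$ columns are independent. Letting $P_j\subseteq\{1,\ldots,n\}$ collect the positions of $w$ aligned to $s[j]$, we have $|P_j|=N$, and exactly as in Lemma~\ref{lem:median}, to minimize mismatches in column $j$ we pick $s[j]$ to be a most frequent letter among $\{w[i]:i\in P_j\}$, contributing $N-\max_c|\{i\in P_j:w[i]=c\}|$ mismatches there. Summing over all $j$ and adding the fixed contribution of the terminal blocks yields $\hdist{\alpha}{w}$; comparing the result to $\Delta$ settles $\misMatch_{\oneVarPat}$.

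The main technical point will be implementing the above in $O(n)$ time. I would make one left-to-right pass that, for each position $i$ of $w$, determines in $O(1)$ whether it lies in a terminal block of $\alpha$ (in which case a running mismatch counter is updated) or inside some copy of $s$ at a specific column $j$ (in which case a counter $C_j[w[i]]$ is incremented). Processing all columns with a single shared counter array of size $\sigma$ and reusing the Lemma~\ref{lem:median} trick of clearing only the entries actually touched gives $O(\sigma+\sum_j|P_j|)=O(\sigma+N\ell)=O(n)$ total time, since $\sigma\le n$ and $N\ell\le n$. Parsing $\alpha$ and computing $\ell$ costs $O(m)\le O(n)$, so the overall complexity is $O(n)$, as claimed.
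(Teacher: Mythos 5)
Your proposal is correct and follows essentially the same route as the paper: the forced length $\ell=(n-T)/N$ of $h(x)$, the observation that terminal positions contribute a fixed cost, and the reduction of the remaining optimization to choosing, for each of the $\ell$ columns, a majority letter among the $N$ aligned positions of $w$ — which is exactly the median-string computation of Lemma~\ref{lem:median} that the paper invokes on the factors of $w$ aligned with the occurrences of $x$. The only difference is presentational (you inline the counting with a shared counter array instead of extracting the factors and calling the lemma as a black box), and your stated $O(\sigma+N\ell)=O(n)$ bound matches the paper's.
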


\begin{proof}
It is enough to show how to solve $\minMisMatch_{\oneVarPat}$. 

Recall that we were given a word $w$, of length $n$, and a pattern $\alpha$, of length $m$. Let $x$ be the single variable that occurs in $\alpha$ and, for simplicity, we denote by $m_x$ the number of occurrences of $x$ in $\alpha$, i.e., $m_x=|\alpha|_x$. Thus, $\alpha=\prod_{i=1}^{m_x}(v_{i-1}x)v_{m_x}$, where $v_i\in \Sigma^*$ for all $i\in \{1,\ldots,m_x\}$.

Let $m'=m-m_x$ be the number of terminal symbols of $\alpha$. It is clear that $x$ should be mapped to a string of length $\ell=\frac{n-m'}{m_x}$. If $\ell$ is not an integer, there exists no string $u$ which can be obtained from $\alpha$ by substituting $x$ with a terminal-word such that $|u|=|w|$ and $\hdist{u}{w}$ is finite. So, let us assume $\ell$ is an integer. 

Now we know that we want to compute a string $u$ which can be obtained from $\alpha$ by substituting $x$ with a terminal-word $u_x$ of length exactly $\ell$. Moreover, $u=\prod_{i=1}^{m_x}(v_{i-1}u_x)v_{m_x}$. We define the factors $w_1,\ldots,w_{m_x}$ of $w$ such that $w_i=w[a_i+1:a_i+\ell_x]$ and $a_i=|\prod_{j=1}^{i-1}(v_{i-1}u_x)v_{i}|$. These are the factors that would align to the occurrences of $u_x$ when aligning $u$ with $w$. As the factors $v_i$ always create the same number of mismatches to the corresponding factors of $w$, irrespective on the choice of $u_x$, we need to choose $u_x$ such that $\sum_{j=1}^{m_x}\hdist{w_i}{u_x}$ is minimal. For this, we can use Lemma \ref{lem:median}, and compute $u_x$ in $O(|\Sigma|+m_x\ell_x)$ time. As it is our assumption that $|\Sigma|\leq n$, we immediately get that $u_x$ can be computed in $O(n)$ time. So $u$ can be computed in $O(n)$ time. To solve $\minMisMatch_{\oneVarPat}$, we simply return $\hdist{u}{w}$, and this can be again computed in linear time.
\end{proof}

By a standard dynamic programming approach, we use the previous result to obtain a polynomial-time solution for $\minMisMatch_{\nonCrossPat}$ based on the solution for $\minMisMatch_{\oneVarPat}$ (in the statement, $p=|\var(\alpha)|$). 
\begin{restatable}{theorem}{thmnonCross}\label{thm:nonCross}
$\minMisMatch_{\nonCrossPat}$ and $\misMatch_{\nonCrossPat}$ can be solved in $O(n^3p)$~time.
\end{restatable}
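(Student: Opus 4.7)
The plan is to reduce $\minMisMatch_{\nonCrossPat}$ to a sequence of instances of $\minMisMatch_{\oneVarPat}$, glued together by a left-to-right dynamic program on prefixes of $w$. First, I would decompose the input non-cross pattern as $\alpha = \beta_1 \beta_2 \cdots \beta_p$, where each $\beta_i$ is a $\oneVarPat$-pattern and the variables used by distinct $\beta_i$'s are pairwise different. Such a factorization always exists by the structural characterization of $\nonCrossPat$ recalled before the theorem, and it can be produced in $O(m)$ time by a single scan of $\alpha$ (starting a new factor whenever a new variable first appears, with terminal runs merged into an adjacent factor). Since the variables of different $\beta_i$'s are disjoint, any substitution $h$ of $\alpha$ is a free product of independent substitutions $h_i$ on $\beta_i$; consequently, the total number of mismatches decomposes additively over the factors of $w$ aligned to the $\beta_i$.

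This suggests defining $T[i][j] = \hdist{\beta_1\cdots\beta_i}{w[1:j]}$, with the convention $T[0][0] = 0$ and $T[0][j] = +\infty$ for $j > 0$. The recurrence is
\[
T[i][j] \;=\; \min_{0 \leq j' < j} \bigl(T[i-1][j'] + M_i(j'+1,\, j)\bigr),
\]
where $M_i(a, b)$ denotes the minimum number of mismatches when aligning $\beta_i$ to $w[a:b]$. Each such $M_i(a, b)$ is itself an instance of $\minMisMatch_{\oneVarPat}$ on the pattern $\beta_i$ and the factor $w[a:b]$, so by Theorem \ref{thm:1VarPat} it can be computed in $O(b - a + 1)$ time (returning $+\infty$ when the mandatory length of the variable-substitution is not a non-negative integer). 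The final answer to $\minMisMatch_{\nonCrossPat}$ is $T[p][n]$, and $\misMatch_{\nonCrossPat}$ is obtained by comparing $T[p][n]$ with $\Delta$.

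For the running time, the table $T$ has $O(np)$ entries, and filling entry $T[i][j]$ costs $\sum_{j'=0}^{j-1} O(j - j') = O(j^2) = O(n^2)$ time, summing the costs of all the $M_i$ invocations used in its recurrence. Multiplying, we obtain an overall running time of $O(n^3 p)$, matching the statement.

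The only step requiring real justification is the correctness of the DP decomposition, namely that splitting $\alpha$ at the boundaries between the $\beta_i$'s does not miss any globally better substitution. This is precisely guaranteed by having pairwise distinct variables across the $\beta_i$'s: the independent choices of $h_1,\dots,h_p$ range over exactly the same set as global substitutions $h$ of $\alpha$. Beyond this structural observation, the remaining components are routine, so I do not foresee any serious obstacle. A natural direction for improvement --- sharing work between calls to $M_i(\cdot,\cdot)$ across different factors --- is unnecessary to meet the stated $O(n^3p)$ bound.
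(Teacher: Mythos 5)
Your proposal is correct and follows essentially the same route as the paper: factorize the non-cross pattern into $\oneVarPat$-blocks with pairwise distinct variables, run a prefix dynamic program $T[i][j]=\min_{j'}\bigl(T[i-1][j']+M_i(j'+1,j)\bigr)$, and compute each inner term via Theorem \ref{thm:1VarPat} in linear time, giving $O(n^3p)$ overall. The justification via disjointness of the variables across blocks matches the paper's argument, so nothing further is needed.
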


\begin{proof}
It is enough to show how to solve $\minMisMatch_{\nonCrossPat}$. Once more, we were given a word $w$, of length $n$, and a pattern $\alpha$, of length $m$. Assume $\var(\alpha)=\{x_1,\ldots,x_p\}$, and we have $\alpha=\beta_1\beta_2\cdots\beta_p$, where $\beta_{2i+1}$ is an $x_{2i+1}$-block, for all $i$ such that $1\leq 2i+1\leq m$, and $\var(\beta_{2i})=\{x_{2i}\}$, for all $i$ such that $1< 2i\leq m$. Let $\alpha_\ell = \beta_1\cdots \beta_\ell$, for $\ell\geq 1$. 

The idea of our algorithm is the following.

For $\ell$ from $1$ to $p$, we define $Dist[j][\ell]=\hdist{\alpha_\ell}{w[1:j]}$ for all prefixes $w[1:j]$ of $w$. This matrix can be computed by dynamic programming.

For $\ell=1$, we can use Theorem \ref{thm:1VarPat} to compute each element $Dist[j][1]$ in linear time. So, $Dist[\cdot][1]$ is computed in $O(n^2)$ time. 

Consider now the case when $\ell>1$ and assume we have computed the array $Dist[\cdot][\ell-1]$. For a position $j$ of the word $w$, we compute $Dist[j][\ell]=\min\{Dist[j'][\ell-1]+\hdist{\beta_\ell}{w[j'+1:j]}\mid j'\leq j\}$, where $\hdist{\beta_\ell}{w[j'+1:j]}$ is computed, once more, by Theorem \ref{thm:1VarPat}. It is clear that computing each element $Dist[j][\ell]$ as described above is correct, and that this computation takes $O(n^2)$ time. 

Therefore, we can compute all elements of the matrix $Dist[\cdot][\cdot]$ in $O(n^3 p)$ time. We return $Dist[n][p]$ as the answer to $\minMisMatch_{\nonCrossPat}$. 
\end{proof}

The results presented so far show that $\minMisMatch_P$ and $\misMatch_P$ can be solved in polynomial time, as long as we do not allow interleaved occurrences of variables in the patterns of the class $P$. We now consider the case of $\oneRepPat$-patterns, the simplest class of patterns which permits interleaved occurrences of variables. 

For simplicity, in the results regarding $\oneRepPat$ we assume that the variable which occurs more than once in the input pattern is denoted by $x$.
\begin{restatable}{theorem}{thmoneRep}\label{thm:oneRep}
$\minMisMatch_{\oneRepPat}$ and $\misMatch_{\oneRepPat}$ can be solved in $O(n^{k+2}m)$ time, where $k$ is the number of $x$-blocks in the input pattern $\alpha$.
\end{restatable}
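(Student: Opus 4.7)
The plan is to reduce to the tools already developed for regular patterns by enumerating a small number of parameters. First I would decompose $\alpha = \gamma_0 B_1 \gamma_1 \cdots B_k \gamma_k$, where $B_1,\ldots,B_k$ are the $k$ $x$-blocks of $\alpha$ and each $\gamma_i$ is a (possibly empty) factor of $\alpha$ that contains no occurrence of $x$. Because $x$ is the only variable allowed to repeat, every variable appearing in some $\gamma_i$ occurs exactly once in the entire pattern; in particular each $\gamma_i$ lies in $\regPat$, and the variable sets of distinct $\gamma_i$'s are pairwise disjoint and also disjoint from $\{x\}$. This separability is the structural fact that makes the global optimisation decouple cleanly.

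The algorithm then enumerates the length $\ell=|h(x)|\in[0:n]$ and the tuple $(p_1,\ldots,p_k)\in[1:n]^k$ of starting positions in $w$ of the aligned images of $B_1,\ldots,B_k$, producing $O(n^{k+1})$ configurations. For each configuration the length $L_i=(|B_i|-|B_i|_x)+\ell\cdot|B_i|_x$ of the image of $B_i$ is fixed, so $B_i$ is aligned to $w[p_i:p_i+L_i-1]$ and $\gamma_i$ to the gap $w[p_i+L_i:p_{i+1}-1]$ (with $\gamma_0$ aligned to $w[1:p_1-1]$ and $\gamma_k$ to $w[p_k+L_k:n]$); infeasible configurations, where these intervals overlap or overflow $w$, are discarded.

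For each surviving configuration I would compute the total cost in three pieces: (a) a direct scan of the $O(m)$ terminal positions inside the $B_i$'s against the aligned positions of $w$; (b) from the $|\alpha|_x$ length-$\ell$ factors of $w$ aligned to occurrences of $x$, Lemma \ref{lem:median} picks the optimal $h(x)$ and returns the resulting mismatch count in $O(|\Sigma|+|\alpha|_x\cdot\ell)=O(n)$ time; (c) for each $\gamma_i$, the value $\hdist{\gamma_i}{w[p_i+L_i:p_{i+1}-1]}$, computed independently by the standard $O(|\gamma_i|\cdot n)$ dynamic programme for approximate matching of a regular pattern (e.g.\ the $O(nm)$-DP sketched at the start of the proof of Theorem \ref{thm:misMatch}), which sums to $O(nm)$ across all $i$ since $\sum_i|\gamma_i|\le m$. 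Adding (a)+(b)+(c) gives $O(nm)$ per configuration, hence a total of $O(n^{k+1})\cdot O(nm)=O(n^{k+2}m)$. Returning the minimum over all configurations solves $\minMisMatch_{\oneRepPat}$, and comparing it to $\Delta$ solves $\misMatch_{\oneRepPat}$.

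The main point to check carefully is the correctness of the decoupling: one must verify that the minimum over all substitutions $h$ equals, for each fixed $(\ell,p_1,\ldots,p_k)$, the sum of the terminal contribution, the optimum produced by Lemma \ref{lem:median}, and the independent optima of the $\gamma_i$'s. This follows because the three types of contributions live on disjoint positions of $w$, and the variables involved are disjoint as well: $h(x)$ is chosen once globally in (b), and the non-$x$ variables of each $\gamma_i$ are distinct from each other and from $x$, so their optimal substitutions can be chosen independently in (c).
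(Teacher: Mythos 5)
Your proposal is correct and follows essentially the same strategy as the paper's proof: enumerate the length of $h(x)$ and the $k$ starting positions of the $x$-blocks ($O(n^{k+1})$ configurations), use the median-string lemma (Lemma \ref{lem:median}, which is the core of Theorem \ref{thm:1VarPat}) to fix $h(x)$ optimally, and solve each regular gap piece independently in $O(nm)$ total per configuration. The only differences are cosmetic (a direct terminal scan plus Lemma \ref{lem:median} instead of invoking Theorem \ref{thm:1VarPat}, and the plain DP instead of Theorem \ref{thm:minMisMatch} for the regular parts), yielding the same $O(n^{k+2}m)$ bound.
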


\begin{proof}
Once more, we only show how $\minMisMatch_{\oneRepPat}$ can be solved. The result for $\misMatch_{\oneRepPat}$ follows then immediately. 

In $\minMisMatch_{\oneRepPat}$, we are given a word $w$, of length $n$, and a pattern $\alpha$, of length $m$, which, as stated above, has exactly $k$ $x$-blocks. Thus $\alpha=\prod_{i=1}^k(\gamma_{i-1}\beta_i)\gamma_k$, where the factors $\beta_i$, for $i\in\{1,\ldots,k\}$, are the $x$-blocks of $\alpha$. It is easy to observe that $\var(\gamma_i)\cap \var(\gamma_j)=\emptyset$, for all $i$ and $j$, and  $\gamma=\gamma_0\gamma_1\cdots \gamma_k$ is a regular pattern. 

When aligning $\alpha$ to $w$ we actually align each of the patterns $\gamma_j$ and $\beta_i$, for $0\leq j\leq k$ and $1 \leq i\leq k$, to respective factors of the word $w$. Moreover, the factors to which these patterns are respectively aligned are completely determined by the length $\ell$ of the image of $x$, and the starting positions $h_i$ of the factors aligned to the patterns $\beta_i$, for $1\leq i\leq k$. Knowing the length $\ell$ of the image of $x$, we can also compute, for $1\leq i\leq k$, the length $\ell_i$ of $\beta_i$, when $x$ is replaced by a string of length $\ell$. In this case, $\gamma_0$ is aligned $u_0=w[1..h_1-1]$ and, for $1\leq i\leq k$, $\beta_i$ is aligned to $w_i=w[h_i:h_i+\ell_i-1]$ and $\gamma_i$ is aligned $u_i=w[h_{i-1}+\ell_{i-1}:h_i-1]$. Thus, $\beta_1\cdots\beta_k$ matches $w_1\cdots w_k$ and we can use Theorem \ref{thm:1VarPat} to determine $\hdist{\beta_1\cdots\beta_k}{w_1\cdots w_k}$ (or, in other words, determine the string $u_x$ that should replace $x$ in order to realize this Hamming distance). Further, we can use Theorem \ref{thm:minMisMatch} to compute $\hdist{\gamma_i}{u_i}$, for all $i\in \{0,\ldots,k\}$. Adding all these distances up, we obtain a total distance $D_{\ell,h_1,\ldots,h_k}$; this value depends on $\ell,h_1,\ldots,h_k$. 

So, we can simply iterate over all possible choices for $\ell,h_1,\ldots,h_k$ and find $\hdist{\alpha}{w}$ as the minimum of the numbers $D_{\ell,h_1,\ldots,h_k}$. 

By the explanations above, it is straightforward that the approach is correct: we simply try all possibilities of aligning $\alpha$ with $w$. The time complexity is, for each choice of $\ell,h_1,\ldots,h_k$, $O(\sum_{i=1}^k |w_i|)\subseteq O(n)$ for the part corresponding to the computation of the optimal alignment between the factors $\beta_i$ and the words $w_i$, and $O(\sum_{i=0}^k |u_i|\hdist{\gamma_i}{u_i})\subseteq O(nm)$ for the part corresponding to the computation of the optimal alignment between the factors $\gamma_i$ and the words $u_i$. So, the overall complexity of this algorithm is $O(n^{k+2}m)$. 
\end{proof}

We can also show the following more general result.
\begin{restatable}{theorem}{thmkLoc}\label{thm:kLoc}
$\minMisMatch_{\kLocPat}$ and $\misMatch_{\kLocPat}$ can be solved in $O(n^{2k+2}m)$ time.
\end{restatable}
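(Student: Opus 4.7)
The plan is to extend the dynamic programming strategy of Theorem \ref{thm:oneRep} by processing the variables of $\alpha$ in the order $x_1 < x_2 < \cdots < x_p$ of a witness marking sequence for the $k$-locality of $\alpha$; such a sequence can be computed in polynomial time, as in \cite{DayFMN17}, so I would take it as given. After step $i$ of the DP, by $k$-locality the marked positions of $\alpha_{var}$ form at most $k$ non-overlapping maximal factors; each such factor in $\alpha_{var}$ corresponds to a factor of $\alpha$ consisting of the already-marked variable occurrences together with the terminals between them. I would encode the DP state as the tuple of at most $2k$ positions in $w$ delimiting the images in $w$ of these factors, which yields a state space of size $O(n^{2k})$. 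The DP value $D[s]$ at a state $s$ would be the minimum, over substitutions of $x_1,\ldots,x_i$ realizing the alignment recorded by $s$, of the total mismatches between the aligned parts of $h(\alpha)$ and the corresponding parts of $w$.

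For the transition from step $i$ to step $i+1$, I would analyze how the occurrences of $x_{i+1}$ in $\alpha_{var}$ interact with the currently marked factors: each occurrence either extends a factor on its left or right, merges two consecutive factors (when the $\alpha_{var}$-gap between them consists solely of $x_{i+1}$ occurrences), or creates an entirely new factor (when surrounded in $\alpha_{var}$ by variables still unmarked). A transition is thus parametrized by the length $\ell = |h(x_{i+1})|$ and, for each newly created factor, by its starting position in $w$; together with the pre-state, these choices fully determine the post-state as well as the factor of $w$ aligned to every occurrence of $x_{i+1}$. The cost increment I would record is the sum of the mismatches contributed by the terminals in the newly-covered regions of $w$ plus the cost of the optimal substitution of $x_{i+1}$ against the collected aligned factors of $w$, the latter computed via Lemma \ref{lem:median} in time linear in the total length of those factors.

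The answer $\hdist{\alpha}{w}$ is then $\min_s D[s]$ taken over the states $s$ after step $p$ whose unique remaining marked factor covers all of $w$, and the decision version $\misMatch_{\kLocPat}$ follows by comparing this value with $\Delta$. The hard part will be the case analysis that describes precisely how $x_{i+1}$ modifies the marked factors, together with the bookkeeping of consistency constraints: that $\ell$ is non-negative, that newly created factors do not overlap existing ones, and that the left-to-right order of factors in $w$ matches the one imposed by their positions in $\alpha$. Once these details are handled, combining the $p \leq m$ processing steps, the $O(n^{2k})$ DP states, the $O(n)$ choices for $\ell$ and for the placement of a new factor, and the $O(n)$ evaluation cost per transition yields the claimed $O(n^{2k+2}m)$ bound.
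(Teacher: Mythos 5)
Your proposal is correct and follows essentially the same route as the paper's proof: a dynamic program over a witness marking sequence whose states are the at most $2k$ positions of $w$ delimiting the images of the currently marked length-maximal factors of $\alpha$, with transitions parametrized by the length of the image of $x_{i+1}$ (and the placement of newly created blocks), and per-transition costs obtained from the optimal single-variable substitution via Lemma \ref{lem:median} (as packaged in Theorem \ref{thm:1VarPat}). The bookkeeping and the resulting $O(n^{2k+2}m)$ bound match the paper's argument, so nothing further is needed.
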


\begin{proof}
We only present the solution for $\minMisMatch_{\kLocPat}$ (as it trivially works in the case of $\misMatch_{\kLocPat}$ too).

Let us note that, by the results in \cite{DayFMN17}, we can compute a marking sequence of $\alpha$ in $O(m^{2k}k)$ time. So, after such a preprocessing phase, we can assume that we have a word $w$, a $k$-local pattern $\alpha$ (with $p$ variables) with a witness marking sequence $x_1\leq \ldots \leq x_p$ for the $k$-locality of $\alpha$, and we want to compute $\hdist{\alpha}{w}$.

Generally, the main idea behind matching $\kLocPat$-patterns is that when looking for possible ways to align such a pattern $\alpha$ to a word $w$ we can consider the variables in the order given by the marking sequence, and, when reaching variable $x_i$, we try all possible assignments for $x_i$. The critical observation here is that after each such assignment of a new variable, we only need to keep track of the way the $t\leq k$ length-maximal factors of $\alpha$, which contain only marked variables and terminals, match (at most) $t\leq k$ factors of $w$. 

We will use this approach in our algorithm for $\minMisMatch_{\kLocPat}$. 

The first step of this algorithm is the following. We go through $\alpha$ and identify all $x_1$-blocks: $\beta_{1,1},\ldots,\beta_{1,j_1}$. Because $\alpha$ is $k$-local, we have  that $j_1\leq k$. For each $2j_1$-tuple $(i_1,\ldots,i_{2j_1})$ of positions of $w$, we compute the minimum number of mismatches if we align (simultaneously) the patterns $\beta_g$ to the factors $w[i_{2g-1}:i_{2g}]$, for $g$ from $1$ to $j_1$, respectively. This reduces to finding an assignment for $x_1$ which aligns optimally the patterns $\beta_{1,g}$ to the respective factors, and can be done in $O(n)$ time using Theorem \ref{thm:1VarPat}. For each $2j_1$-tuple $(i_1,\ldots,i_{2j_1})$ of positions of $w$, we denote by $M_1(i_1,\ldots,i_{2j_1})$ the minimum number of mismatches resulting from the (simultaneous) alignment of the patterns $\beta_{1,g}$ to the factors $w[i_{2g-1}:i_{2g}]$, for $g$ from $1$ to $j_1$, respectively. Clearly, $M_1$ can be seen as a $j_1$-dimensional array. 

Assume that after $h\geq 1$ steps of our algorithm we have computed the factors $\beta_{h,1},\ldots,\beta_{h,j_h}$ of $\alpha$, which are length-maximal factors of $\alpha$ which only contain the variables $x_1,\ldots,x_h$ and terminals (i.e., extending them to the left or right would introduce a new variable $x_\ell$ with $\ell>h$); as $\alpha$ is $k$-local, we have $j_h\leq k$. Moreover, for each $2j_h$-tuple $(i_1,\ldots,i_{2j_h})$ of positions of $w$, we have computed $M_h(i_1,\ldots,i_{2j_h})$, the minimum number of mismatches if we align (simultaneously) the patterns $\beta_{h,g}$ to the factors $w[i_{2g-1}:i_{2g}]$, for $g$ from $1$ to $j_h$, respectively. $M_h$ is implemented as a $j_h$ dimensional array, and this assumption clearly holds after the first step. 

We now explain how step $h+1$ is performed. 
\begin{enumerate}

    \item We compute the factors $\beta_{h+1,1},\ldots,\beta_{h+1,j_{h+1}}$ of $\alpha$, which are length-maximal factors of $\alpha$ which only contain the variables $x_1,\ldots,x_{h+1}$ and terminals (i.e., extending them to the left or right would introduce a new variable $x_\ell$ with $\ell>h+1$). Clearly, $\beta_{h+1,r}$ is either an $x_{h+1}$-block or it has the form $\beta_{h+1,r}=\gamma_{r,0}\beta_{h,a_r}\gamma_{r,1}\cdots \beta_{r,a_r+b_r}\gamma_{r,b_r+1}$ where the patterns $\gamma_{r,t}$ contain only the variable $x_{h+1}$ and terminals and extending $\beta_{h+1,r}$ to the left or right would introduce a new variable $x_\ell$ with $\ell>h+1$.
    \item We initialize the values $M_{h+1}(i_1,\ldots,i_{2j_{h+1}})\gets \infty$, for each $2j_{h+1}$-tuple $(i_1,\ldots,i_{2j_{h+1}})$ of positions of $w$.  
    \item For each $\ell\leq n$ (where $\ell$ corresponds to the length of the image of $x_{h+1}$) and each $2j_{h}$-tuple $(i_1,\ldots,i_{2j_{h}})$ of positions of $w$ such that $M_h(i_1,\ldots,i_{2j_{h}})$ is finite do the following:
    \begin{enumerate}
        \item We compute the tuple $(i'_1,\ldots,i'_{2j_{h+1}})$ such that $\beta_{h+1,g}$ is aligned to the factor $w[i'_{2g-1}:i'_{2g}]$, for $g$ from $1$ to $j_{h+1}$, respectively. This can be computed based on the fact that the factors $\beta_{h,g}$ are aligned to the factors $w[i_{2g-1}:i_{2g}]$, for $g$ from $1$ to $j_h$, respectively, and the image of $x_{h+1}$ has length $\ell$. 
        \item We compute the factors of $w$ aligned to $x_{h+1}$ in the alignment computed in the previous line. Then, we can use the algorithm from Theorem \ref{thm:1VarPat} and the value of $M_h(i_1,\ldots,i_{2j_{h}})$ to compute an assignment for $x_{h+1}$ which aligns optimally the patterns $\beta_{h+1,g}$ to the corresponding factors of $w$.
        \item If the number of the mismatches in this alignment is smaller than the current value of $M_{h+1}(i'_1,\ldots,i'_{2j_{h+1}})$, we update $M_{h+1}(i'_1,\ldots,i'_{2j_{h+1}})$. 
    \end{enumerate}
\end{enumerate}

This dynamic programming approach is clearly correct. In $M_{h+1}(i_1,\ldots,i_{2j_{h+1}})$ we have the optimal alignment of the patterns $\beta_{h+1,1}, \ldots, \beta_{h+1,j_{h+1}}$ to $w[i_1:i_2], \ldots, w[i_{2j_{h+1}-1}:i_{2j_{h+1}}]$. 
As far as the complexity is concerned, the lines $1$, $3.a,$ $3.b$, $3.c$ can be implemented in linear time, while the for-loop is iterated $O(n^{2k+1})$ times. Line $2$ takes $O(n^{2k})$ times. The whole computation in step $h+1$ of the algorithm takes, thus, $O(n^{2k+1})$ time.

Now, we execute the procedure described above for $h$ from $2$ to $m$, and, in the end, we compute the array $M_{m}$. The answer to our instance of the problem $\minMisMatch_{\kLocPat}$ is $M_m(1,n)$. The overall time complexity needed to perform this computation is $O(mn^{2k+1})$ time.
\end{proof}

Note that $\nonCrossPat$-patterns are $1$-local, while the locality of an $\oneRepPat$-pattern is upper bounded by the number of $x$-blocks. However, the algorithms we obtained in those particular cases are more efficient than the ones which follow from Theorem \ref{thm:kLoc}. 

The fact that Lemma \ref{lem:median} is used as the main building block for our results regarding $\misMatch_{P}$ and $\minMisMatch_{P}$ for $P\in\{\oneRepPat,\kLocPat\}$, suggests that these problems could be closely related to the following well-studied problem \cite{similarRegions,FellowsGN06,BoucherEPTAS,BulteauS20}.
\begin{problem}
  \problemtitle{Consensus Patterns: $\CP$}
  \probleminput{$k$ strings $w_1, \ldots, w_k \in \sig{\ell}$, integer $m \in \mathbb{N}$ with $m\leq\ell$, an integer $\Delta\leq mk$.}
  \problemquestion{Do the strings $s$, of length $m$, and $s_1, \ldots, s_k$, factors of length $m$ of each $w_1, \ldots, w_k$, respectively, exist, such that $\sum_{i=1}^k \hdist{s_i}{s}\leq \Delta$?}
\end{problem}

Exploiting this connection, and following the ideas of \cite{similarRegions}, we can show the following theorem. In this theorem we restrict to the case when the input word $w$ of $\minMisMatch_{\oneRepPat}$ is over  $\Sigma=\{1,\ldots,\sigma\}$ of constant size $\sigma $. 
\begin{restatable}{theorem}{thmOnerepPTAS}\label{thm:1repPTAS}
For each constant $r\geq 3$, there exists an algorithm with run-time $O(n^{r+3})$ for $\minMisMatch_{\oneRepPat}$ whose output distance is at most $\min\left\{2,\left(1+\frac{4\sigma-4}{\sqrt{e}(\sqrt{4r+1}-3)}\right)\right\}\hdist{\alpha}{w}$. 
\end{restatable}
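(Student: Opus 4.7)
The plan is to adapt the classical PTAS of Li, Ma, and Wang for Consensus Patterns~\cite{similarRegions}. Let $\alpha$ contain $p = |\alpha|_x$ total occurrences of the single repeated variable $x$. Fix an optimal substitution with image $u_x^*$, let $\ell^* = |u_x^*|$, and let $F_1^*, \ldots, F_p^*$ be the length-$\ell^*$ factors of $w$ aligned to the $p$ copies of $x$ under this optimal matching. Write $\mathrm{OPT} = C^* + D^*$, where $C^*$ is the mismatch contribution of the terminal symbols and non-repeated variables of $\alpha$, and $D^* = \sum_{i=1}^p \hdist{u_x^*}{F_i^*}$ is the contribution from the $x$-copies. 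The algorithm enumerates candidate images for $x$: for each length $\ell \in \{1, \ldots, n\}$ and each ordered $r$-tuple $(j_1, \ldots, j_r)$ of starting positions of length-$\ell$ factors of $w$, compute the median $\hat u_x$ of $w[j_1 : j_1 + \ell - 1], \ldots, w[j_r : j_r + \ell - 1]$ via Lemma~\ref{lem:median}, substitute $x \to \hat u_x$ in $\alpha$ to obtain a regular pattern $\alpha_{\hat u_x}$, and invoke Theorem~\ref{thm:minMisMatch} to compute $\hdist{\alpha_{\hat u_x}}{w}$; output the minimum distance recorded.

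For the approximation ratio I would combine two arguments. The trivial bound of $2$ already emerges from the $r=1$ enumeration: some singleton tuple picks $\hat u_x = F_t^*$ for some $t$, and the triangle inequality gives $\sum_i \hdist{F_t^*}{F_i^*} \leq 2 D^*$, so the resulting full alignment costs at most $C^* + 2 D^* \leq 2 \cdot \mathrm{OPT}$. For the sharper $(1 + \epsilon)$ bound with $\epsilon = \frac{4(\sigma-1)}{\sqrt{e}(\sqrt{4r+1}-3)}$, I follow the Li--Ma--Wang averaging argument: drawing a uniformly random multiset $\{i_1, \ldots, i_r\}$ from $\{1, \ldots, p\}$ and taking the column-wise majority $\hat u_x$ of $F_{i_1}^*, \ldots, F_{i_r}^*$ yields $\mathbb{E}\!\left[\sum_i \hdist{\hat u_x}{F_i^*}\right] \leq (1+\epsilon) D^*$. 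Because my algorithm exhaustively enumerates all $r$-tuples of starting positions in $w$, it in particular considers the tuple corresponding to the starts of $F_{i_1}^*, \ldots, F_{i_r}^*$, so at least one enumerated $\hat u_x$ attains the expected bound. Plugging this $\hat u_x$ into $\alpha$ and reusing the optimal alignment for the non-$x$ parts gives total distance at most $C^* + (1 + \epsilon) D^* \leq (1 + \epsilon) \cdot \mathrm{OPT}$, and the regular-pattern solver invoked by the algorithm on $\alpha_{\hat u_x}$ can only improve on this.

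The complexity bookkeeping is straightforward: $O(n)$ choices of $\ell$, $O(n^r)$ choices of $r$-tuple, $O(\sigma + n r) = O(n)$ per median by Lemma~\ref{lem:median} (since $\sigma, r$ are constant), and $O(n^2)$ per regular-pattern call by Theorem~\ref{thm:minMisMatch}, giving a total of $O(n^{r+3})$. The hard part will be the Li--Ma--Wang column-by-column estimate underlying the $(1+\epsilon)$ bound: controlling, for each fixed column $q \in \{1, \ldots, \ell^*\}$, the probability that the sample majority differs from $u_x^*[q]$ in terms of the per-column deviations $|\{i : F_i^*[q] \neq u_x^*[q]\}|$ at the optimum, and showing that these losses sum to at most $\epsilon D^*$ across all columns; this is where the specific constant $\tfrac{4(\sigma-1)}{\sqrt{e}(\sqrt{4r+1}-3)}$ arises via a sharp Stirling-type bound on central binomial coefficients. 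Everything else---the enumeration, the decomposition $\mathrm{OPT} = C^* + D^*$, the reduction to a regular-pattern instance via substitution, and the complexity---follows directly from Lemma~\ref{lem:median} and Theorem~\ref{thm:minMisMatch}.
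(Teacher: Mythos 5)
Your proposal is correct and follows essentially the same route as the paper: enumerate a length and an $r$-tuple of factors of $w$, take their column-wise median via Lemma~\ref{lem:median}, substitute into $\alpha$ and solve the resulting regular instance with Theorem~\ref{thm:minMisMatch}, with the factor-$2$ bound from the triangle inequality and the $(1+\epsilon)$ bound from the Li--Ma--Wang sampling/probabilistic-method argument over the factors aligned to $x$ in an optimal solution. The only remaining step you flag as "the hard part" (the per-column expectation bound yielding the constant $\frac{4\sigma-4}{\sqrt{e}(\sqrt{4r+1}-3)}$) is handled in the paper simply by invoking Lemma~7 of~\cite{similarRegions}, so citing that lemma closes your proof exactly as it closes the paper's.
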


\begin{proof}
We first note that there exists a relatively simple algorithm solving $\minMisMatch_{\oneRepPat}$ such that the output distance is no more than $2\hdist{\alpha}{w}$ (which also works for integer alphabets). 

Indeed, assume that we have a substitution $h$ for which $\hdist{h(\alpha)}{w}=\hdist{\alpha}{w}$. Assume that the repeated variable $x$ is mapped by $h$ to a string $u$ and the $t$ occurrences of $x$ are aligned, under $h$, to the factors $w_1,w_2,\ldots, w_t$ of $w$. Now, let $w_i$ be such $\hdist{u}{w_i}\leq \hdist{u}{w_j}$ for all $j\neq i$. Let us consider now the substitution $h'$ which substitutes $x$ by $w_i$ and all the other variables exactly as $h$ did. We claim that $\hdist{h'(\alpha)}{u}\leq 2 \hdist{h(\alpha)}{u}$. It is easy to see that $\hdist{h'(\alpha)}{w}-\hdist{h(\alpha)}{w}= \sum_{j=i}^{t}(\hdist{w_i}{w_j} - \hdist{u}{w_j})\leq \sum_{j=i}^{t}(\hdist{w_i}{u} + \hdist{u}{w_j} - \hdist{u}{w_i})$ (where the last inequality follows from the triangle inequality for the Hamming Distance). 
Thus, $\hdist{h'(\alpha)}{w}-\hdist{h(\alpha)}{w}\leq \sum_{j=i}^{t}\hdist{w_i}{u}\leq \sum_{j=i}^{t}\hdist{w_j}{u}\leq \hdist{h(\alpha)}{u}.$ So our claim holds.

A consequence of the previous observation is that there exists a substitution $h'$ that maps $x$ to a factor of $w$ and produces a string $h'(\alpha)$ such that $\hdist{h'(\alpha)}{u}\leq 2 \hdist{\alpha}{u}$. So, for each factor $u$ of $w$, we $x$ by $u$ in $\alpha$ to obtain a regular pattern $\alpha'$, then use Theorem \ref{thm:minMisMatch} to compute $\hdist{\alpha'}{w}$. We return the smallest value $\hdist{\alpha'}{w}$ achieved in this way. Clearly, this is at most $2 \hdist{\alpha}{u}$. The complexity of this algorithm is $O(n^4)$, as it simply uses the quadratic algorithm of Theorem \ref{thm:minMisMatch} for each factor of $w$. 

We will now show how this algorithm can be modified to produce a value closer to $\hdist{\alpha}{w}$, while being less efficient. 

The algorithm consists of the following main steps:
\begin{enumerate}
    \item For $\ell\leq n/r$ and $r$ factors $u_1,\ldots,u_r$ of length $\ell$ of $w$ do the following:
    \begin{enumerate}
        \item Compute $u_{u_1,\ldots,u_r}$ the median string of $u_1,\ldots,u_r$ using Lemma \ref{lem:median}.
        \item Let $\alpha'$ be the regular pattern obtained by replacing $x$ by $u_{u_1,\ldots,u_r}$ in $\alpha$.
        \item Compute the distance $d_{u_1,\ldots,u_r}=\hdist{\alpha'}{w}$ using Theorem \ref{thm:minMisMatch}.
    \end{enumerate}
    \item Return the smallest distance $d_{u_1,\ldots,u_r}$ computed in the loop above.
\end{enumerate}

Clearly, for $r=1$ the above algorithm corresponds to the simple algorithm presented in the beginning of this proof. Let us analyse its performance for an arbitrary choice of $r$. 

The complexity is easy to compute: we need to consider all possible choices for $\ell$ and the starting positions of $u_1,\ldots,u_r$. So, we have $O(n^{r+1})$ possibilities to select the non-overlapping factors $u_1,\ldots,u_r$ of length $\ell $ of $w$. The computation done inside the loop can be performed in $O(n^2)$ time. So, overall, our algorithm runs in $O(n^{r+3})$ time. 

Now, we want to estimate how far away from $\hdist{\alpha}{w}$ is the value this algorithm returns. In this case, we will make use of the fact that the input terminal-alphabet is constant. We follow closely (and adapt to our setting) the approach from \cite{similarRegions}.

Firstly, a notation. In step 1.b of the algorithm above, we align $\alpha'$ to $w$ with a minimal number of mismatches. In this alignment, let $d'_{u_1,\ldots,u_r}$ be the total number of mismatches caused by the factors $u_{u_1,\ldots,u_r}$ which replaced the occurrences of the variable $x$ in $\alpha$.

Now, assume that we have a substitution $h$ for which $\hdist{h(\alpha)}{w}=\hdist{\alpha}{w}=d_{opt}$. Assume also that the repeated variable $x$ is mapped by $h$ to a string $u_{opt}$ of length $L$ and the $t$ occurrences of $x$ are aligned, under $h$, to the factors $w_1,w_2,\ldots, w_t$ of $w$. Let $d'_{opt}$ be the number of mismatches caused by the alignment of the images of the $t$ occurrences of $x$ under $h$ to the factors $w_1,w_2,\ldots, w_t$. Finally, let $\rho=1+\frac{4\sigma-4}{\sqrt{e}(\sqrt{4r+1}-3)}.$

Note that, for $\ell=L$, $u_1,\ldots, u_r$ correspond to a set of randomly chosen numbers $i_1,\ldots,i_r$ from $\{1,\ldots,n\}$: their starting positions. We will show in the following that $E\left[d'_{u_1,\ldots,u_r}\right]\leq \rho d'_{opt}$. If this inequality holds, then we can apply the probabilistic method: there exists at least a choice of $u_1,\ldots,u_r$ of length $L$ such that $d'_{u_1,\ldots,u_r} \leq \rho d'_{opt}$. As we try all possible lengths $\ell$ and all variants for choosing $u_1,\ldots,u_r$ of length $\ell$, we will also consider the choice of $u_1,\ldots,u_r$ of length $L$ such that $d'_{u_1,\ldots,u_r} \leq \rho d'_{opt}$, and it is immediate that, for that, for the respective $u_1,\ldots,u_r$  we also have that $d_{u_1,\ldots,u_r}\leq \rho d_{opt}$. Thus, the value returned by our algorithm is at most $\rho d_{opt}$. 

So, let us show the inequality $E\left[d'_{u_1,\ldots,u_r}\right]\leq \rho d_{opt}$. 

For $\mathtt{a}\in \Sigma$, let $f_j(\mathtt{a})=|\{i\mid 1\leq i\leq t, w_i[j]=\mathtt{a}\}|$. Now, for an arbitrary string $s$ of length $L$, we have that $\sum_{i=1}^t\hdist{w_i}{s} = \sum_{j=1}^L(t-f_j(s[j]))$. So, for $s=u_{opt}$ we get $\sum_{i=1}^t\hdist{w_i}{u_{opt}} = \sum_{j=1}^L(t-f_j(u_{opt}[j]))$, and for $s=u_{u_1,\ldots,u_r}$ we have that $d'_{opt}=\sum_{i=1}^t\hdist{w_i}{u_{u_1,\ldots,u_r}} = \sum_{i=j}^L(t-f_j(u_{u_1,\ldots,u_r}[j]))$. 

Therefore, $E\left[d'_{u_1,\ldots,u_r}\right]=E\left[\sum_{j=1}^L(t-f_j(u_{u_1,\ldots,u_r}[j]))\right]=\sum_{j=1}^LE\left[t-f_j(u_{u_1,\ldots,u_r}[j])\right]$. 

Consequently, $E\left[d'_{u_1,\ldots,u_r}-d'_{opt}\right]=\sum_{j=1}^L(E\left[t-f_j(u_{u_1,\ldots,u_r}[j])\right]-t+f_j(u_{opt}[j])).$

That is, $E\left[d'_{u_1,\ldots,u_r}-d'_{opt}\right]=\sum_{j=1}^LE\left[f_j(u_{opt}[j])-f_j(u_{u_1,\ldots,u_r}[j])\right].$

By Lemma 7 of \cite{similarRegions}, we have that $E\left[f_j(u_{opt}[j])-f_j(u_{u_1,\ldots,u_r}[j])\right]\leq (\rho-1)(t-f_j(u_{opt}[j])).$

Hence, $E\left[d'_{u_1,\ldots,u_r}-d'_{opt}\right]\leq (\rho-1)\sum_{j=1}^L(t-f_j(u_{opt}[j]))=(\rho-1)d'_{opt}.$

So, we indeed have that $E\left[d'_{u_1,\ldots,u_r}\right]\leq \rho d'_{opt}.$

In conclusion, the statement of the theorem holds.
\end{proof}

It remains open whether other algorithmic results related to $\CP$ (such as those from, e.g.,  \cite{BrejovaBHLV05,BrejovaBHV06,Marx08}) apply to our setting too. 

In the following we show two hardness results which explain why the algorithms in Theorems \ref{thm:oneRep} and \ref{thm:1repPTAS} are interesting.

\begin{restatable}{theorem}{thmoneRepWOne}\label{thm:oneRepW1}
$\misMatch_\oneRepPat$ is $W[1]$-hard w.r.t. the number of $x$-blocks. 
\end{restatable}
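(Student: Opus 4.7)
The plan is to give a parameterized reduction from the Consensus Patterns problem $\CP$ (defined earlier in this section), which is $W[1]$-hard when parameterized by the number $k$ of input strings, to $\misMatch_{\oneRepPat}$ parameterized by the number of $x$-blocks. Given a $\CP$ instance $(w_1,\dots,w_k,m,\Delta)$ with each $w_i\in\Sigma^{\ell}$, I will build a $\misMatch_{\oneRepPat}$-instance whose pattern $\alpha$ has exactly $k$ $x$-blocks, so that the target parameter equals the source parameter and the reduction runs in polynomial time.

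Introduce one fresh buffer terminal $\#$ and two sequences of $a=b=\Delta+1$ pairwise distinct fresh anchor terminals $A_1,\dots,A_a$ and $B_1,\dots,B_b$, and set $L=\Delta+1$. For every $w_i$, enumerate its length-$m$ factors $f_{i,1},\dots,f_{i,\ell-m+1}$ and form the catalog $w_i^{*}=\prod_{j=1}^{\ell-m+1}(A_1\cdots A_a)\,f_{i,j}\,(B_1\cdots B_b)$. Take the word $w=\#^L w_1^{*}\#^L w_2^{*}\cdots\#^L w_k^{*}\#^L$ and the pattern $\alpha=\#^L\prod_{i=1}^{k}\bigl(y_i\,(A_1\cdots A_a)\,x\,(B_1\cdots B_b)\,y_i'\,\#^L\bigr)$, with the $y_i,y_i'$ pairwise distinct single-occurrence variables. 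Then $\alpha\in\oneRepPat$ has exactly $k$ $x$-blocks; the image $h(x)$ plays the role of the $\CP$-consensus string, and $|h(y_i)|$ selects which length-$m$ factor of $w_i$ is placed opposite the $i$-th $x$-block.

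The forward direction is routine: a $\CP$-solution $(s,f_{1,j_1},\dots,f_{k,j_k})$ of cost at most $\Delta$ lifts to $h(x)=s$, $|h(y_i)|=(j_i-1)(a+m+b)$, $|h(y_i')|=(\ell-m+1-j_i)(a+m+b)$, which aligns every buffer and every anchor block of $\alpha$ with its counterpart in $w$ and produces exactly $\sum_i\hdist{s}{f_{i,j_i}}\le\Delta$ mismatches. For the backward direction I will show that any alignment with at most $\Delta$ mismatches has the intended structure. Since $L>\Delta$, each $\#^L$ block of $\alpha$ can be misaligned with the corresponding $\#^L$ block of $w$ by at most $\Delta$ positions, so the slot partition is preserved. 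Inside a slot, the key role of the \emph{pairwise distinct} letters in each anchor is that the $A$-anchor of $\alpha$ can match all its $a$ positions in $w$ only if it is placed at a position of the form $(j-1)(a+m+b)$ inside $w_i^{*}$; any other placement makes all $a=\Delta+1$ anchor positions mismatch at once, already exceeding the budget $\Delta$. The same holds for the $B$-anchor. Combining the two constraints pins $|h(x)|$ to $m+t(a+m+b)$ for some integer $t\ge 0$; any $t\ge 1$ would force each $x$-block to straddle an entire $(B_1\cdots B_b)(A_1\cdots A_a)$ chunk, so that $h(x)$ is effectively compared against two distinct length-$m$ factors per slot, making the total cost at least twice the minimum $\CP$-cost and hence strictly more than $\Delta$ whenever the $\CP$-instance is a ``no''. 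Thus $t=0$, i.e.\ $|h(x)|=m$, and the alignment directly reads off a $\CP$-solution of cost at most $\Delta$.

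The main technical obstacle is enforcing, in the backward direction, that each $x$-block aligns exactly with a length-$m$ factor of the corresponding $w_i$. With naive single-letter anchors $A^{a}$ and $B^{b}$, a horizontal shift of an anchor by $\delta$ positions would still match $a-\delta$ of its positions and cost only $O(\delta)$ mismatches; this would allow $x$ to absorb pieces of the anchor padding, effectively replacing the length-$m$ factor by a length-$(m-\delta)$ sub-factor and breaking the correspondence with $\CP$ at pattern length $m$. Using \emph{pairwise distinct} letters inside each anchor forces every non-zero shift to mismatch at all $a$ (resp.\ $b$) anchor positions at once, which already exceeds $\Delta$, thereby pinning the anchor placement---and therefore $|h(x)|=m$---exactly.
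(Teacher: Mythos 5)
Your reduction is correct and starts from the same source problem as the paper (Consensus Patterns, with $h(x)$ playing the consensus string and per-string slots delimited by spacer variables), but the gadgetry is genuinely different. The paper embeds each $w_i$ verbatim, separates slots with heavy periodic walls $(\mathtt{a}^M\mathtt{b}^M)^M$ with $M=(k\ell)^2$, and adds a dedicated tail gadget $(\mathtt{c}^M\mathtt{d}^M)^M x$ aligned against $(\mathtt{c}^M\mathtt{d}^M)^M\$^m$ whose sole purpose is to force $|h(x)|=m$ at a cost of exactly $m$ extra mismatches; this yields $k+1$ $x$-blocks and threshold $\Delta+m$, and requires an exchange argument to ensure $h(x)$ avoids the letter $\$$. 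You instead catalog all length-$m$ factors of each $w_i$ explicitly, flank each factor with anchors of $\Delta+1$ pairwise distinct fresh letters, and use $\#^{\Delta+1}$ separators; misplacing any anchor costs $\Delta+1$ mismatches at once, which pins $|h(x)|$ to $m+t(a+m+b)$ and lets you keep exactly $k$ $x$-blocks and threshold exactly $\Delta$, with no length-forcing gadget and only polynomially many (rather than $(k\ell)^2$-sized) padding symbols. Two places need tightening when you write this up: (i) ``the slot partition is preserved'' needs the explicit argument that each pattern $\#^{\Delta+1}$ block overlaps exactly one word $\#$-block, that no two pattern blocks can share a word block (consecutive pattern blocks are at least $a+b=2\Delta+2$ apart while a word block has length $\Delta+1$), and that the first and last blocks are pinned, so the overlap map is the identity; and (ii) the conclusion ``thus $t=0$'' is not literally forced on yes-instances — the clean statement is that any alignment of cost at most $\Delta$ yields a $\CP$-solution of cost at most $\Delta$ either directly (when $t=0$) or by reading off the sub-block of $h(x)$ aligned to the first factor column (when $t\ge 1$, each of the $t+1$ columns already gives a feasible $\CP$-solution, so the total cost bounds the optimum); with that rephrasing your backward direction goes through.
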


\begin{proof}
We reduce $\CP$ to $\misMatch_\oneRepPat$, such that an instance of $\CP$ with $k$ different input strings is mapped to an instance of $\misMatch_\oneRepPat$ with $k+1$ $x$-blocks (where $x$ is the repeated variable), each containing exactly one occurrence of $x$. 

Hence, we consider an instance of $\CP$ which consists of $k$  strings $w_1, \ldots w_k \in \sig{\ell}$ of length $\ell$ and two integer $m, \Delta$ defining the length of the target factors and the number of allowed mismatches, respectively. 

The instance of $\misMatch_\oneRepPat$ which we construct consists of a text $w$ and a pattern $\alpha$, such that $\alpha$ contains $k+1$ $x$-blocks, each with exactly one occurrence of $x$, and is of polynomial size w.r.t. the size of the $\CP$-instance. Moreover, the number of mismatches allowed in this instance of $\misMatch_\oneRepPat$ is $\Delta' = m+\Delta$. That is, if there exists a solution for the $\CP$-instance with $\Delta$ allowed mismatches, then, and only then, we should be able to find a solution of the $\misMatch_\oneRepPat$-instance with $\Delta+m$ mismatches. 

The construction of the $\minMisMatch_{\oneRepPat}$ is realized in such a way that the word $w$ encodes the input strings, while $\alpha$ creates the mechanism for selecting the string $s$ and corresponding factors $s_1,\ldots,s_k$. The general idea is that $x$ should be mapped to $s$, and the factors to which the occurrences of $x$ are aligned should correspond to the strings $s_1,\ldots,s_k$. 

The structure of the word $w$ and that of the pattern $\alpha$ ensure that, in an alignment of $\alpha$ with $w$ which cannot be traced back to a admissible solution for the $\CP$-instance (that is, the occurrences of $x$ are not aligned to factors of length $m$ of the words $w_1,\ldots,w_k$ or $x$ is not mapped to a string of length $m$) we have at least $M \gg \Delta'$ mismatches, hence it cannot lead to a positive answer for the constructed instance of $\misMatch_{\oneRepPat}$. 

The reduction consists of three main steps. Firstly, we present a pair of gadgets to encode the relation of the strings $w_i$ and their factors $s_i$, for $i$ from $1$ to $k$. Then, we present a second pair of gadgets, which ensures that, in a positive solution of $\misMatch_{\oneRepPat}$, the variable $x$ can only be mapped to a string of length $m$, corresponding to the string $s$.
Finally, we show how to assemble these gadgets into the input word $w$ and the input pattern $\alpha$ for $\misMatch_\oneRepPat$.

\textbf{First pair of gadgets.}
We introduce the new letters $\mathtt{\set{a,b}}$, not contained in the input alphabet of the $\CP$-instance, as well as the variable $x$ and two fresh variables $y_i, z_i$, for each $i$ form $1$ to $k$. We construct the following two gadgets for each input string $w_i$ with $1 \leq i \leq k$. 
\begin{itemize}
    \item A gadget to be included in $w$: $\mathtt{g_i = w_i \overbrace{a^Mb^M\ldots a^Mb^M}^M}$.
    \item A gadget to be included in $\alpha$: $\mathtt{f_i = y_ixz_i \overbrace{a^Mb^M\ldots a^Mb^M}^M}$.
\end{itemize}
These gadgets allows us to align the $i^{th}$ occurrence of $x$ to an arbitrary factor of the word $w_i$, for $i$ from $1$ to $k$. 

\textbf{Second pair of gadgets.}
In this case, we use three new letters $\mathtt{\set{c,d,\$}}$ which are not contained in the input alphabet of $\CP$. Also, let $M=(k\ell)^2$. We define two new gadgets.
\begin{itemize}
    \item A gadget to be included in $w$: $\mathtt{A_w=\overbrace{c^Md^M\ldots c^Md^M}^M}\$^m$.
    \item A gadget to be included in $\alpha$: $\mathtt{A_{\alpha}=\overbrace{c^Md^M\ldots c^Md^M}^M}x$.
\end{itemize}
These gadgets enforce that, in an alignment of $\alpha$ and $w$, the variable $x$ is mapped to a string of length $m$, at the cost of exactly $m$ extra mismatches. Note that, because $\Delta\leq km$, we have that $M\gg \Delta$. 

\textbf{Final assemblage.} The word $w$ and the pattern $\alpha$ are defined as follows.
\begin{itemize}
    \item $w= \mathtt{ g_1g_2\ldots g_k A_w}$ and $\alpha = \mathtt{ f_1f_2\ldots f_k A_{\alpha}}$.
\end{itemize}

To wrap up, the instance of $\minMisMatch_{\oneRepPat}$ is defined by $w,\alpha,\Delta+m$.

{\bf The correctness of the reduction.} We will show that our reduction is correct by a detailed case analysis. We consider an alignment of $\alpha$ and $w$ with minimal number of mismatches, and we make the following observations. 
\begin{itemize}
    \item[A.] Firstly, if every $g_i$ is aligned to $f_i$, for $i$ from $i$ to $k$, it is immediate that $x$ is mapped to a string of length $m$, as the last occurrence of $x$ will be aligned to the $\$^m $ suffix of $w$. Thus, the total number of mismatches between $\alpha$ and $w$ in an alignment with a minimum number of mismatches is upper-bounded by $(k+1)m$. 
    \item[B.] Secondly, we assume, for the sake of a contradiction, that the length of the image of $x$ is not $m$. If $|x|>m$ (respectively, $|x|<m$) then the prefix $\mathtt{{(c^Md^M)}^M}$ of $A_\alpha$ is aligned to a factor of $w$ which starts strictly to the left of (respectively, to the right of) the first position of the prefix $\mathtt{{(c^Md^M)}^M}$ of $A_w$. It is not hard to see that this causes at least $M$ mismatches. Indeed, in the case when $|x|>m$, if the factor $\mathtt{{(c^Md^M)}^M}$ of $\alpha$ is aligned to a factor that starts at least $M$ position to the left of the factor $\mathtt{{(c^Md^M)}^M}$ of $w$, the conclusion is immediate; if the factor $\mathtt{{(c^Md^M)}^M}$ starts less then $M$ positions to the left of the factor $\mathtt{{(c^Md^M)}^M}$ of $w$, then each group $c^M$ in $\alpha$ will be aligned to a factor of $w$ that includes at least a $\mathtt{d}$ letter, so we again reach the conclusion. In the case when $|x|<m$, then, again, each group $c^M$ in $\alpha$ will be aligned to a factor of $w$ that includes at least a $\mathtt{d}$ letter, so the alignment leads to at least $M$ mismatches. \\
    So, we can assume from now on that $x$ is mapped to a string of length $m$. This also implies that $A_\alpha$ and $A_w$ are aligned, so we will largely neglect them from now on.
    \item[C.] Thirdly, we assume that there exists $i$ such that $\len{h(y_i)} + \len{h(z_i)} \neq \len{w_i}-m$. Let $j=\min\{i\leq k \mid \len{h(y_i)} + \len{h(z_i)} \neq \len{w_i}-m\}$. Then the suffixes $\mathtt{{(a^Mb^M)}^M}$ of $g_j$ and $f_j$ do not align perfectly to each other. If  $\len{h(y_j)} + \len{h(z_j)} < \len{w_i}-m$, then the suffix $\mathtt{{(a^Mb^M)}^M}$ of $f_j$ is aligned to a factor of $w$ which starts inside $w_j$. This immediately causes at least $M$ mismatches, as each group $\mathtt{a^M}$ will overlap to a group of which contains at least one $\mathtt{b}$ letter. If  $\len{h(y_j)} + \len{h(z_j)} > \len{w_i}-m$, then the suffix $\mathtt{{(a^Mb^M)}^M}$ of $f_j$ is aligned to a factor of $w$ which starts strictly to the right of the factor $w_j$. However, because $M=(k\ell)^2\gg k\ell$, and $f_j$ and $g_j$ are followed by the same number of factors $\mathtt{{(a^Mb^M)}^M}$ (until the factors $A_\alpha$ and $A_w$ are reached), the factor corresponding to the suffix $\mathtt{{(a^Mb^M)}^M}$ of $f_j$ cannot start more than $k\ell$ positions to the right of $w_j$. It is then immediate that this factor $\mathtt{{(a^Mb^M)}^M}$ of $f_j$ will cause at least $M$ mismatches: each group $\mathtt{a^M}$ will overlap to a group of which contains at least one $\mathtt{b}$ letter. \\
    So, from now on we can assume that the factors $\mathtt{{(a^Mb^M)}^M}$ of $g_j$ and $f_j$ are aligned. 
    \item[D.] At this point, it is clear that in each alignment of $\alpha$ and $w$ which fulfils the conditions described in items B and C: the variable $x$ is mapped to a string of length $m$, and its first $k$ occurrences are aligned to factors of the words $w_1,\ldots,w_k$. We will now show that for each alignment of $\alpha$ and $w$ in which the image of $x$ contains a $\$ $ symbol and fulfills the  conditions above, there exists an alignment of $\alpha$ and $w$ with at most the same number of mismatches, in which the image of $x$ does not contain a $\$ $ symbol and, once more, fulfills the conditions B and C. Assume that in our original alignment $x$ is mapped to a string $u_x$ of length $m$ such that $u_x[i]=\$ $. Let $u_1,\ldots,u_k$ be the factors of $w_1,\ldots,w_k$, respectively, to which the first occurrences of the variable $x$ are aligned. Consider the string $u'_x$ which is obtained from $u_x$ by simply replacing the $\$ $ symbol on position $i$ by $u_1[i]$. And then consider the alignment of $\alpha$ and $w$ which is obtained from the original alignment by changing the image of $x$ to $u'_x$ instead of $u_x$. When compared to the original alignment, the new alignment has an additional mismatch caused by the occurrence of $x$ aligned to $\$^m$, but at least one less mismatch caused by the alignments of the first $k$ occurrences of $x$. Indeed, in the original alignment, the $i^{th}$ position of $u_x$ was a mismatch to the $i^{th}$ position of any string $u_1,\ldots, u_k$, but now at least the $i^{th}$ positions of $w_1$ and $u'_x$ coincide. This shows that our claim holds. A similar argument shows that for any alignment in which $x$ is mapped to a string containing other letters than the input letters from the $\CP$-instance there exits an alignment in which $x$ is mapped to a string containing only letters from the $\CP$-instance.\\
    Hence, from now on we can assume that the factors $\mathtt{{(a^Mb^M)}^M}$ of $g_j$ and $f_j$ are aligned and that the image of $x$ has length $m$ and is over the input alphabet of $\CP$-instance. 
\end{itemize}

Based on the observations A-D, we can show that the reduction has the desired~properties. If the $\CP$-instance admits a solution $s,s_1,\ldots,s_k$ which causes a number of mismatches less or equal to $\Delta$, then we can produce an alignment of $\alpha$ to $w$ as follows. We map $x$ to $s$ and, for $i$ from $1$ to $k$, we map $x_i$ and $y_i$ to the prefix of $w_i$ occurring before $s_i$ and, respectively, the suffix of $w_i$ occurring after $s_i$. This leads to $\Delta+m$ mismatches between $\alpha$ and $w$, so the input $(w,\alpha,\Delta+m)$ of $\misMatch_{\oneRepPat}$ is accepted. Conversely, if we have an alignment of $\alpha$ and $w$ with at most $\Delta+m$ mismatches, then we have an alignment with the same number of mismatches which fulfills the conditions summarized at the end of item D above. Hence, we can define $s$ as the image of $x$ in this alignment, and the strings $s_1,\ldots,s_k$ as the factors of $w$ aligned to the first $k$ occurrences of $x$ from $\alpha$. Clearly, for $i$ between $1$ and $k$, $s_i$ is a factor of $w_i$. As $m$ mismatches of the alignment were caused by the alignment of the last $x$ to $\$^m$, we get that $\sum_{i=1}^k\hdist{s}{s_i}\leq \Delta$. Thus, the instance of $\CP$ is accepted. 

This concludes the proof of the correctness of our reduction. As $M$ is clearly of polynomial size w.r.t. the size of the $\CP$-instance, it follows that both $w$ and $\alpha$ are of polynomial size $\mathcal{O}(kM^2)$. Therefore, the instance of $\minMisMatch_{\oneRepPat}$ can be computed in polynomial time, and our entire reduction is done in polynomial time. Moreover, we have shown that the instance $(w,\alpha,\Delta+M)$ of $\minMisMatch_{\oneRepPat}$ is answered positively if and only if the original instance of $\CP$ is answered positively. 

Finally, as the number of $x$ blocks in $\alpha$ is $k+1$, where $k$ is the number of input strings in the instance of $\CP$, and $\CP$ is $W[1]$-hard with respect to this parameter, it follows that $\minMisMatch_{\oneRepPat}$ is also $W[1]$-hard when the number of $k$-blocks in $\alpha$ is considered as parameter. This completes our proof.
\end{proof}

It is worth noting that the pattern $\alpha$ constructed in the reduction above is $k-1$-local (and not $k$-local): a witness marking sequence is $z_1<y_2<z_2<y_3<\ldots<z_{k-1}<y_k<x<y_1<z_k$. Thus, $\misMatch_\oneRepPat$ is W[1]-hard w.r.t. locality of the input pattern as well. Also, it is easy to see that $\scd(\alpha)=2$, and, by the results of \cite{ReidenbachS14}, this shows that the treewidth of the pattern $\alpha$, as defined in the same paper, is at most $3$. Thus, even for classes of patterns with constant $\scd$, number or repeated variables, or treewidth, the problems $\misMatch_P$ and $\minMisMatch_P$ can become intractable.

In Theorem \ref{thm:1repPTAS} we have shown that $\minMisMatch_{\oneRepPat}$ admits a polynomial time approximation scheme (for short, PTAS). We will show in the following that it does not admit an efficient PTAS (for short, EPTAS), unless $FPT=W[1]$. This means that there is no PTAS for $\minMisMatch_{\oneRepPat}$ such that the exponent of the polynomial in its running time is independent of the approximation ratio.

To show this, we consider an optimisation variant of the problem $\CP$, denoted $\mathtt{minCP}$. 
In this problem, for $k$ strings $w_1, \dots, w_k \in \sig{\ell}$ of length $\ell$ and an integer $m \in \mathbb{N}$ with $m\leq\ell$, we are interested in the smallest non-negative integer $\Delta$ for which there exist strings $s$, of length $m$, and $s_1, \ldots, s_k$, factors of length $m$ of each $w_1, \ldots, w_k$, respectively, such that $\sum_{i=1}^k \hdist{s_i}{s}=\Delta$.
In \cite{BoucherEPTAS}, it is shown that $\mathtt{minCP}$ has no EPTAS unless $FPT=W[1]$. We can use this result and the reduction from the Theorem \ref{thm:oneRepW1} to show the following result. 
\begin{restatable}{theorem}{thmoneRepNoEPTAS}\label{thm:oneRepNoEPTAS}
$\minMisMatch_\oneRepPat$ has no EPTAS unless $FPT=W[1]$.
\end{restatable}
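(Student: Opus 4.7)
The plan is to reduce the optimisation problem $\mathtt{minCP}$ to $\minMisMatch_\oneRepPat$ in an approximation-preserving fashion, so that an EPTAS for the latter would yield an EPTAS for the former, contradicting the result of \cite{BoucherEPTAS}. The natural candidate is the reduction of Theorem~\ref{thm:oneRepW1}, which maps a $\mathtt{minCP}$-instance $(w_1,\ldots,w_k,m)$ with optimum $\Delta^*$ to a $\minMisMatch_\oneRepPat$-instance of optimum exactly $\Delta^*+m$. The main difficulty is that this is not approximation-preserving out of the box: a $(1+\epsilon)$-approximation $\tilde\Delta$ of $\Delta^*+m$ only yields $\tilde\Delta-m\le(1+\epsilon)\Delta^*+\epsilon m$, whose relative error $\epsilon m/\Delta^*$ is uncontrolled when $\Delta^*\ll m$. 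This additive gap is the crux of the argument.

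To circumvent it, I would amplify the $\mathtt{minCP}$ optimum before applying the reduction. First, I would check in polynomial time whether $\Delta^*=0$ (by testing whether $w_1,\ldots,w_k$ share a common length-$m$ factor, e.g.\ via a generalised suffix tree); if so, return $0$. Otherwise $\Delta^*\ge 1$, and I would form an amplified instance $(\tilde w_1,\ldots,\tilde w_{tk},m)$ by duplicating each $w_i$ exactly $t:=m$ times. Since the set of length-$m$ factors of each copy coincides with that of the original $w_i$ and the consensus $s$ is shared across all input strings, it is straightforward to verify that the optimum of this amplified instance is exactly $t\Delta^*$.

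Applying Theorem~\ref{thm:oneRepW1} to the amplified instance then produces a $\minMisMatch_\oneRepPat$-instance of size polynomial in the original input (since $t=m$ is polynomial) whose optimum is $t\Delta^*+m$. Given a target ratio $1+\epsilon'$, I would invoke the hypothetical EPTAS with parameter $\epsilon:=\epsilon'/2$ to obtain $\tilde\Delta\le(1+\epsilon)(t\Delta^*+m)$, and return $(\tilde\Delta-m)/t$ as the approximation of $\Delta^*$. A short calculation, using $\Delta^*\ge 1$ and $t=m$, gives $(\tilde\Delta-m)/t\le(1+\epsilon)\Delta^*+\epsilon\le(1+2\epsilon)\Delta^*=(1+\epsilon')\Delta^*$. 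The overall running time is $f(\epsilon'/2)\cdot N^{O(1)}$ for $N$ polynomial in the original input, yielding an EPTAS for $\mathtt{minCP}$ and the desired contradiction. The one place that requires care is ensuring that the amplification step both preserves the optimum exactly and keeps the instance size polynomial; both follow from the fact that replication acts coordinate-wise in the $\mathtt{minCP}$ objective.
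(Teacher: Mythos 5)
Your proposal is correct, and it diverges from the paper's proof at exactly the step you identify as the crux. The paper also composes the reduction of Theorem~\ref{thm:oneRepW1} with the hypothetical EPTAS and subtracts $m$, but it neutralizes the additive $m$ by calling the EPTAS with the \emph{instance-dependent} precision $1+\frac{\epsilon}{2m}$ and then computing $\Delta \leq D-m \leq (1+\frac{\epsilon}{2m}+\frac{\epsilon}{2\Delta})\Delta \leq (1+\epsilon)\Delta$ (implicitly using $\Delta\geq 1$, as you do). You instead keep the precision parameter constant ($\epsilon'/2$) and amplify the $\mathtt{minCP}$ instance by duplicating each string $t=m$ times, so that the optimum becomes $t\Delta^*$ and the additive $m$ is absorbed by the bound $\epsilon m/t=\epsilon\leq\epsilon\Delta^*$; your verification that duplication scales the optimum exactly and keeps the reduced instance polynomial is sound, and the explicit $\Delta^*=0$ test is a reasonable piece of hygiene. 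What your route buys is a cleaner fit with the EPTAS definition: since the ratio you request is independent of the input, the resulting algorithm for $\mathtt{minCP}$ genuinely runs in time $f(\epsilon')\cdot N^{O(1)}$, whereas the paper's call with ratio $1+\frac{\epsilon}{2m}$ gives running time $f(\epsilon/2m)\cdot n^{O(1)}$, which requires a more careful reading (or a strengthening of the hypothesis) to be of EPTAS form; the paper's route, in exchange, avoids any modification of the $\mathtt{minCP}$ instance and is shorter. Both arguments then conclude by invoking the non-existence of an EPTAS for $\mathtt{minCP}$ from \cite{BoucherEPTAS}.
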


\begin{proof}
Assume, for the sake of a contradiction, that $\minMisMatch_\oneRepPat$ has an EPTAS. That is, for an input word $w$ and an $\oneRepPat$-pattern $\alpha$, there exists a polynomial time algorithm which returns as answer to $\minMisMatch_\oneRepPat$ a value $\delta'\leq (1+\epsilon) \hdist{\alpha}{w}$, and the exponent of the polynomial in its running time is independent of $\epsilon$. 

An algorithm for $\mathtt{minCP}$ would first implement the reduction in Theorem \ref{thm:oneRepW1} to obtain a word $w$ and a pattern $\alpha$. Then it uses the EPTAS for $\minMisMatch_{\oneRepPat}$ to approximate the distance between $\alpha$ and $w$ with approximation ratio $(1+\frac{\epsilon}{2m})$. Assuming that this EPTAS returns the value $D$, the answer returned by this algorithm for the $\mathtt{minCP}$ problem is $D-m$. 

As explained in the proof of Theorem \ref{thm:oneRepW1}, it is easy to see that the distance between the word $w$ and the pattern $\alpha$ constructed in the respective reduction is $m+\Delta$, if $\Delta$ is the answer to the instance of the $\mathtt{minCP}$ problem. Thus, the value $D $ returned by the EPTAS for $\minMisMatch_\oneRepPat$ fulfils $m+\Delta\leq D\leq (1+\frac{\epsilon}{2m})(m+\Delta)$. So, we have $\Delta \leq D-m\leq \frac{\epsilon}{2}+(1+\frac{\epsilon}{2m})\Delta$. We get that $\Delta \leq D-m\leq (1+\frac{\epsilon}{2m}+\frac{\epsilon}{2\Delta})\Delta\leq (1+\epsilon)\Delta.$ So, indeed, $D-m$ would be a $(1+\epsilon)-$approximation of $\Delta$. 

Therefore, this would yield an EPTAS for $\mathtt{minCP}$. This is a contradiction to the results reported in \cite{BoucherEPTAS}, where it was shown that such an EPTAS does not exist, unless $FPT=W[1]$. This concludes our proof. 
\end{proof}

%

\newpage
\bibliography{main.bbl}

\end{document}